\theoremstyle{plain}
\newtheorem{thm}{Theorem}[section]
\newtheorem*{nonumprop}{Proposition \ref{prop:sum_prod_cyclic}}
\newtheorem{defi}[thm]{Definition}
\newtheorem{cor}[thm]{Corollary}
\newtheorem{lem}[thm]{Lemma}
\newtheorem{prop}[thm]{Proposition}
\newtheorem{ex}[thm]{Example}
\theoremstyle{nonumberbreak}
\newtheorem{proof}{Proof}
\newcommand{\vek}[1]{\boldsymbol{\mathbf{#1}}}
\title{Squares of Matrix-product Codes\thanks{This work is supported by the Danish Council for Independent Research: grant DFF-4002-00367, the Spanish Ministry of Economy/FEDER: grant RYC-2016-20208 (AEI/FSE/UE), the Spanish Ministry of
Science/FEDER: grant PGC2018-096446-B-C21, and Junta de CyL (Spain): grant VA166G18.}
}
\author{Ignacio Cascudo\thanks{IMDEA Software Institute, Madrid, Spain. Email: ignacio.cascudo@imdea.org}, Jaron Skovsted Gundersen\thanks{Department of Mathematical Sciences, Aalborg University, Denmark. Email: jaron@math.aau.dk}, and Diego Ruano\thanks{IMUVA-Mathematics Research Institute, Universidad de Valladolid, Spain. Email: diego.ruano@uva.es}}
\begin{document}

\maketitle

\begin{abstract}
\noindent The component-wise or Schur product $C*C'$ of two linear error-correcting codes $C$ and $C'$ over certain finite field is the linear code spanned by all component-wise products of a codeword in $C$ with a codeword in $C'$. When $C=C'$, we call the product the square of $C$ and denote it $C^{*2}$. Motivated by several applications of squares of linear codes in the area of cryptography, in this paper we study squares of so-called matrix-product codes, a general construction that allows to obtain new longer codes from several ``constituent'' codes. We show that in many cases we can relate the square of a matrix-product code to the squares and products of their constituent codes, which allow us to give bounds or even determine its minimum distance. We consider the well-known $(u,u+v)$-construction, or Plotkin sum (which is a special case of a matrix-product code) and determine which parameters we can obtain when the constituent codes are certain cyclic codes. In addition, we use the same techniques to study the squares of other matrix-product codes, for example when the defining matrix is Vandermonde (where the minimum distance is in a certain sense maximal with respect to matrix-product codes).
\end{abstract}

\section{Introduction}
Component-wise or Schur products of linear error-correcting codes have been studied for different purposes during the last decades, from efficient decoding to applications in several different areas within cryptography. Given two linear (over some finite field $\mathbb{F}$) codes $C_1,C_2$ of the same length we define the component-wise product of the codes $C_1*C_2$ to be the span over $\mathbb{F}$ of all component-wise products $\vek{c}_1*\vek{c}_2$, where $\vek{c}_i\in C_i$.

Many of these applications involve the study of the main parameters of interest in the theory of error-correcting codes, namely dimension and minimum distance, where in this case we want to analyze the behaviour of some of these parameters both on the factor codes and their product simultaneously. Recall that the dimension $\dim C$ of a linear code $C$ is its dimension as a vector space over the finite field, and the minimum distance $d(C)$ of $C$ is the smallest Hamming distance between two distinct codewords of the code. Additionally, some applications involve the analysis of the minimum distance of the duals of the factor codes (recall that the dual $C^{\perp}$ of $C$ is the linear code containing all vectors that are orthogonal to all codewords of $C$ under the standard inner product over $\mathbb{F}$). 
 
One of the first applications where component-wise products of codes became relevant concerned error decoding via the notion of error-locating pairs in the works of Pellikaan and of Duursma and Kötter~\cite{DK94,P92}. An error-locating pair for a code $C$ is a pair $C_1,C_2$ where $C_1*C_2\subseteq C^\perp$, and the number of errors the pair is able to correct depends on the dimensions and minimum distances of the codes and their duals. More precisely, it is required that $\dim(C_1)>t$ and $d(C_2^\perp)>t$ if we should be able to locate $t$ errors.

Later on, the use of component-wise products found several applications in the area of cryptography. For example, some attacks to variants of the McEliece cryptosystem (which relies on the assumption that it is hard to decode a general linear code) use the fact that the dimension of the product $C*C$ tends to be much larger when $C$ is a random code than when $C$ has certain algebraic structure, which can be used to identify algebraic patterns in certain subcodes of the code defining the cryptosystem, see for instance \cite{CMP17, COT17, MS07, PM17, W10}. 

A different cryptographic problem where products of codes are useful is private information retrieval, where a client can retrieve data from a set of servers allocating a coded database in such a way that the servers do not learn what data the client has accessed. In \cite{FGHK17} a private information retrieval protocol based on error-correcting codes was shown, where it is desirable to use two linear codes $C_1$ and $C_2$ such that $\dim(C_1)$, $d(C_2^\perp)$, and $d(C_1*C_2)$ are simultaneously high. 

In this work, however, we are more interested in the application of products of codes to the area of secure multiparty computation. The goal of secure multiparty computation is to design protocols which can be used in the situation where a number of parties, each holding some private input, want to jointly compute the output of a function on those inputs, without at any point needing that any party reveals his/her input to anybody. A central component in secure computation protocols is secure multiplication, which different protocols realize in different ways. Several of these protocols require to use an error-correcting code $C$ whose square $C^{*2}=C*C$ has large minimum distance while there are additional conditions on $C$ which vary across the different protocols.

For example a well known class of secure computation protocols \cite{BGW88, CCD88, CDM00} relies on the concept of strongly multiplicative secret sharing scheme formalized in \cite{CDM00}. Such secret sharing schemes can be constructed from linear codes $C$ where the amount of colluding cheating parties that the protocol can tolerate 
is $\min (d(C^\perp), d(C^{*2}))+2$, where $C^{\perp}$ is the dual code to $C$. These two minimum distances are therefore desired to be simultaneously high. For more information about secret sharing and multiparty computation, see for instance \cite{CDN15}.

Other more recent protocols have the less stringent requirement that $d(C^{*2})$ and $\dim C$ are simultaneously large. This is the case of the MiniMac \cite{DZ13} protocol, a secure computation protocol to evaluate boolean circuits, and its successor Tinytables \cite{DNNR16}. In those protocols, the cheating parties have certain probability of being able to disrupt the computation, but this probability is bounded by $2^{-d(C^{*2})}$, meaning that a high distance on the square will give a higher security. On the other hand, a large relative dimension, or rate, of $C$ will reduce the communication cost, so it is desirable to optimize both parameters.
A very similar phenomenon occurs in recent work about commitment schemes, which are a building block of many multiparty computation protocols; in fact, when these schemes have a number of additional homomorphic properties and in addition can be composed securely, we can base the entire secure computation protocol on them \cite{FPY18}. Efficient commitment schemes with such properties were constructed in \cite{CDD18} based on binary linear codes, where multiplicative homomorphic properties require again to have a relatively large $d(C^{*2})$ (see \cite[section 4]{CDD18}) and the rate of the code is also desired to be large to reduce the communication overhead.

These applications show the importance of finding linear codes where the minimum distance of the square $d(C^{*2})$ is large relative to the length of the codes and where some other parameter (in some cases $\dim(C)$, in others $d(C^\perp)$) is also relatively large. Moreover, it is especially interesting for the applications that the codes are binary, or at least be defined over small fields.

Powers of codes, and more generally products, have been studied in several works such as \cite{C17, CCMZ15, MZ15, R13b, R13a, R15} from different perspectives. In \cite{R13b} an analog of the Singleton bound for $\dim(C)$ and $d(C^{*2})$ was established, and in \cite{MZ15} it is shown that Reed Solomon codes are essentially the only codes which attain this bound unless some of the parameters are very restricted. However, Reed Solomon codes come with the drawback that the field size must be larger than or equal to the length of the codes. Therefore, finding asymptotically good codes over a fixed small field has also been studied, where in this case asymptotically good means that both $\dim(C)$ and $d(C^{*2})$ grows linearly with the length of the code $C$. In \cite{R13a} the existence of such a family over the binary field was shown, based on recent results on algebraic function fields. However, it seems like most families of codes do not have this property: in fact, despite the well known fact that random linear codes will, with high probability, be over the Gilbert Varshamov bound, and hence are asymptotically good in the classical sense, this is not the case when we impose the additional restriction that $d(C^{*2})$ is linear in the length, as it is shown in \cite{CCMZ15}. The main result in \cite{CCMZ15} implies that for a family of random linear codes either the code or the square will be asymptotically bad.

The asymptotical construction from \cite{R13a}, despite being very interesting from the theoretical point of view, has the drawbacks that the asymptotics kick in relatively late and moreover, the construction relies on algebraic geometry, which makes it computationally expensive to construct such codes. Motivated by the aforementioned applications to cryptography, \cite{C17} focuses on codes with fixed lengths (but still considerably larger than the size of the field), and constructs cyclic codes with relatively large dimension and minimum distance of their squares. In particular, the parameters of some of these codes are explicitly computed in the binary case. 

This provides a limited constellation of parameters that we know that are achievable for the tuple consisting of length of $C$, $\dim(C)$ and $d(C^{*2})$. It is then interesting to study what other parameters can be attained, and a natural way to do so is to study how the square operation behaves under known procedures in coding theory that allow to construct new codes from existing codes.

One such construction is matrix-product codes, where several codes can be combined into a new longer code. Matrix-product codes, formalized in \cite{BN01}, is a generalization of some previously known code constructions, such as for example the $(u,u+v)$-construction, also known as Plotkin sum.  Matrix-product codes have been studied in several works, including \cite{BN01, HLR09, HR10, OS02}.

\subsection{Results and outline}

In this work, we study squares of matrix-product codes. We show that in several cases, the square of a matrix-product code can be also written as a matrix-product code. This allows us to determine new achievable parameters for the squares of codes.

More concretely, we start by introducing matrix-product codes and products of codes in Section \ref{sec:prelim}. Afterwards, we determine the product of two codes when both codes are constructed using the $(u,u+v)$-construction in Section \ref{sec:u,u+v}. In Section \ref{sec:cyclic}, we restrict ourselves to squares of codes and exemplify what parameters we can achieve using cyclic codes in the $(u,u+v)$-construction in order to compare the parameters with the codes from \cite{C17}. 

At last, in Section \ref{sec:other_cons}, we consider other constructions of matrix-product codes. In particular, we consider the case where the defining matrix is Vandermonde, which is especially relevant because such matrix-product codes achieve the best possible minimum distance that one can hope for with this matrix-product strategy. We show that the squares of these codes are again matrix-product codes, and if the constituent codes of the original matrix-product code are denoted $C_i$, then the ones for the square are all of the form $\sum_{i+j=l}C_i*C_j$ for some $l$. This is especially helpful for determining the parameters if the $C_i$'s are for example algebraic geometric codes. We remark that this property also holds for the other constructions we study in this paper, but only when the $C_i$'s are nested. Finally, we also study the squares of a matrix-product construction from \cite{BN01} where we can apply the same proof techniques as we have in the other constructions.

\section{Preliminaries}\label{sec:prelim}
Let $\mathbb{F}_q$ be the finite field with $q$ elements. A linear code $C$ is a subspace of $\mathbb{F}_q^n$. When $C$ has dimension $k$, we will call it an $[n,k]_q$ code. A generator matrix for a code $C$ is a $k\times n$ matrix consisting of $k$ basis vectors for $C$ as the $k$ rows. The Hamming weight of $\vek{x}\in \mathbb{F}_q^n$, denoted $w(\vek{x})$, is the number of nonzero entries in $\vek{x}$ and the Hamming distance between $\vek{x},\vek{y}\in \mathbb{F}_q^n$ is given by $d(\vek{x},\vek{y})=w(\vek{x}-\vek{y})$. By the linearity of $C$ the minimum Hamming distance taken over all pairs of distinct elements in $C$ is the same as the minimum Hamming weight taking over all non-zero elements in $C$, and therefore we define the minimum distance of $C$ to be	$d(C)=\min_{\vek{x}\in C\setminus\{0\}}\{w(\vek{x})\}$. If it is known that $d(C)=d$ (respectively if we know that $d(C)\geq d$) then we call $C$ an $[n,k,d]_q$ code (resp. $[n,k,\geq d]_q$). We denote by $C^{\perp}$ the dual code to $C$, i.e., the vector space given by all elements $\vek{y}\in \mathbb{F}_q^n$ such that for every $\vek{x}\in C$, $\vek{x}$ and $\vek{y}$ are orthogonal with respect to the standard inner product in $\mathbb{F}_q^n$. If 
$C$ is an $[n,k]_q$ code then $C^{\perp}$ is an $[n,n-k]_q$ code.

We recall the definition and basic properties of matrix-product codes (following \cite{BN01}) and squares of codes. 

	\begin{defi}[Matrix-product code]\label{def:mp_codes}
		Let $C_1, \ldots, C_s \subseteq \mathbb{F}_q^n$ be linear codes and let $A\in \mathbb{F}_q^{s\times l}$ be a matrix with rank $s$ (implying $s\leq l$). Then we define the matrix-product code $C = [C_1,\ldots,C_s]A$, as the set of all matrix products $[\vek{c}_1,\ldots,\vek{c}_s]A$, where $\vek{c}_i=(c_{1i},\ldots,c_{ni})^T\in C_i$. 
	\end{defi}
	We call $A=\left[a_{ij}\right]_{i=1,\ldots,s, j=1,\ldots,l}$ the defining matrix and the $C_i$'s the constituent codes. We can consider a codeword $\vek{c}$, in a matrix-product code, as a matrix of the form
	\begin{align}\label{eq:matrix_rep}
		\vek{c}=\begin{bmatrix}
			c_{11}a_{11}+c_{12}a_{21}+\cdots+c_{1s}a_{s1} & \cdots & c_{11}a_{1l}+c_{12}a_{2l}+\cdots+c_{1s}a_{sl} \\
			\vdots & \ddots & \vdots\\
			c_{n1}a_{11}+c_{n2}a_{21}+\cdots+c_{ns}a_{s1} & \cdots & c_{n1}a_{1l}+c_{n2}a_{2l}+\cdots+c_{ns}a_{sl}
		\end{bmatrix},
	\end{align}
using the same notation for the $\vek{c}_i$'s as in the definition. Reading the entries in this matrix in a column-major order, we can also consider $\vek{c}$ as a vector of the form
	\begin{align}\label{eq:vector_rep}
		\vek{c}=\left(\sum_{i=1}^s a_{i1}\vek{c}_i,\ldots,\sum_{i=1}^s a_{il}\vek{c}_i\right)\in \mathbb{F}_q^{nl}.
	\end{align}
We sum up some known facts about matrix-product codes in the following proposition. 
	\begin{prop}\label{prop:matrix_prod_codes}
		Let $C_1,\ldots, C_s \subseteq \mathbb{F}_q^n$ be linear $[n,k_1],\ldots,[n,k_s]$ codes with generator matrices $G_1,\ldots,G_s$, respectively. Furthermore, let $A\in \mathbb{F}_q^{s\times l}$ be a matrix with rank $s$ and let $C=[C_1,\ldots,C_s]A$. Then $C$ is an $[nl,k_1+\cdots+k_s]$ linear code and a generator matrix of $C$ is given by
		\begin{align*}
			G= \begin{bmatrix}
				a_{11}G_1 & \cdots & a_{1l}G_1\\
				\vdots & \ddots & \vdots\\
				a_{s1}G_s & \cdots & a_{sl}G_s
			   \end{bmatrix}.
		\end{align*}
	\end{prop}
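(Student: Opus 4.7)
The claim has three parts: that $C$ is linear of length $nl$, that the listed matrix $G$ generates $C$, and that its dimension is $k_1+\cdots+k_s$. The first part is essentially definitional. Linearity follows from the identity $[\vek{c}_1,\ldots,\vek{c}_s]A+[\vek{c}_1',\ldots,\vek{c}_s']A=[\vek{c}_1+\vek{c}_1',\ldots,\vek{c}_s+\vek{c}_s']A$ together with scalar compatibility, using that each $C_i$ is itself a subspace. The length $nl$ is read directly from the vector representation \eqref{eq:vector_rep}. So the content of the proposition is the description of a generator matrix and the implicit rank computation.

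To verify that the rows of $G$ span $C$, I would unravel $G$ row-by-row. The rows coming from the block $[a_{i1}G_i,\ldots,a_{il}G_i]$ are precisely the vectors $(a_{i1}\vek{g},\ldots,a_{il}\vek{g})$ where $\vek{g}$ ranges over the rows of $G_i$. Comparing with \eqref{eq:vector_rep}, this is the codeword $[0,\ldots,0,\vek{g},0,\ldots,0]A$ in which only the $i$-th ``slot'' is non-zero. By linearity of the map $(\vek{c}_1,\ldots,\vek{c}_s)\mapsto [\vek{c}_1,\ldots,\vek{c}_s]A$ in each $\vek{c}_i$ separately, and by the fact that the rows of $G_i$ span $C_i$, every codeword of $C$ is in the row span of $G$. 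Conversely, every row of $G$ lies in $C$ by the same interpretation. Hence the row span of $G$ equals $C$.

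For the dimension, I would show that the rows of $G$ are linearly independent, using the rank condition on $A$ at the final step. Suppose a linear combination of rows of $G$ vanishes, and group the coefficients as $\vek{x}_i\in\mathbb{F}_q^{k_i}$ acting on the $i$-th block of rows. Setting $\vek{c}_i=\vek{x}_iG_i\in C_i$, the vanishing combination becomes $[\vek{c}_1,\ldots,\vek{c}_s]A=0$ as an $n\times l$ matrix. Equivalently, each row $\vek{v}_r\in\mathbb{F}_q^s$ of $[\vek{c}_1,\ldots,\vek{c}_s]$ satisfies $\vek{v}_r A=0$. Since $A$ has rank $s$, its rows are linearly independent, so the map $\vek{v}\mapsto\vek{v}A$ is injective on $\mathbb{F}_q^s$; thus $\vek{v}_r=0$ for every $r$, i.e.\ $\vek{c}_i=0$ for every $i$. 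Finally, since each $G_i$ has linearly independent rows, $\vek{x}_iG_i=0$ forces $\vek{x}_i=0$. This gives $\dim C=k_1+\cdots+k_s$, and the rank hypothesis on $A$ is the only non-routine ingredient—this is the step I would expect to be the one the reader needs to pause on, even though it is short.
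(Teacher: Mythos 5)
Your proof is correct and complete; the paper states this proposition without proof, as a known fact imported from the matrix-product code literature (\cite{BN01}), and your argument --- identifying each block of rows of $G$ with codewords supported in a single slot, then using full row rank of $A$ to get injectivity of $\vek{v}\mapsto \vek{v}A$ row-by-row for the dimension count --- is exactly the standard one.
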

We now turn our attention to the minimum distance of $C$. Denote by $A_i$ the matrix consisting of the first $i$ rows of $A$ and let $C_{A_i}$ be the linear code spanned by the rows in $A_i$. From \cite{OS02}, we have the following result on the minimum distance.
	\begin{prop}\label{prop:min_dist_bound}
		We are making the same assumptions as in Proposition \ref{prop:matrix_prod_codes}, and write $D_i=d(C_{A_i})$ and $d_i=d(C_i)$. Then the minimum distance of the matrix-product code $C$ satisfies
		\begin{align}\label{eq:min_dist_bound}
			d(C)\geq\min\{D_1d_1,D_2d_2,\ldots,D_sd_s\}.
		\end{align}
	\end{prop}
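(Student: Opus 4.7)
The plan is to take an arbitrary nonzero codeword $\vek{c}\in C$, view it in its matrix form \eqref{eq:matrix_rep}, and bound its Hamming weight by the product of two numbers: a count of nonzero \emph{rows} of the matrix (coming from $d_t$) and a lower bound on the weight of each nonzero row (coming from $D_t$), where $t$ is chosen cleverly. I would pick $t$ to be the largest index with $\vek{c}_t\neq\vek{0}$, so that $\vek{c}_{t+1}=\cdots=\vek{c}_s=\vek{0}$ and $\vek{c}_t\in C_t$ has weight at least $d_t$.

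For each row index $k\in\{1,\ldots,n\}$ of the matrix \eqref{eq:matrix_rep}, set $\vek{r}_k=(c_{k1},\ldots,c_{kt})\in\mathbb{F}_q^t$. Since all later constituents vanish, the $k$-th row of $\vek{c}$ equals $\vek{r}_k A_t$, which lies in $C_{A_t}$. So every nonzero row contributes weight at least $D_t$ to $w(\vek{c})$, and it remains to count the nonzero rows.

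The key observation is that $A$ has rank $s$, hence its rows are linearly independent, hence so are the first $t$ rows; thus $A_t$ has rank $t$, and the linear map $\vek{r}\mapsto \vek{r}A_t$ from $\mathbb{F}_q^t$ into $\mathbb{F}_q^l$ is injective. In particular, whenever $c_{kt}\neq 0$ we have $\vek{r}_k\neq\vek{0}$ and therefore $\vek{r}_k A_t\neq \vek{0}$. Since $\vek{c}_t\in C_t$ is nonzero, at least $d_t$ of the coordinates $c_{kt}$ are nonzero, producing at least $d_t$ nonzero rows of the matrix. Adding up the weights of these rows gives
\begin{equation*}
w(\vek{c})\;\geq\; d_t\cdot D_t\;\geq\;\min_{1\leq i\leq s}\{D_i d_i\},
\end{equation*}
and since $\vek{c}$ was arbitrary this proves \eqref{eq:min_dist_bound}.

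The only real obstacle is the injectivity step: without it one might only conclude $\vek{r}_k\neq\vek{0}$ rather than $\vek{r}_k A_t\neq\vek{0}$, which is not enough to count nonzero rows. The fact that $A$ has rank $s$ (so all its row subsets are independent) is exactly what rescues the argument, and it also explains why the definition insists on this rank hypothesis. Everything else is bookkeeping from the matrix picture \eqref{eq:matrix_rep}.
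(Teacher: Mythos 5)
Your proof is correct, and it is essentially the standard argument from \cite{OS02} that the paper cites without reproducing: choose the largest index $t$ with $\vek{c}_t\neq\vek{0}$, note each row of \eqref{eq:matrix_rep} lies in $C_{A_t}$, and use the full row rank of $A_t$ to guarantee at least $d_t$ nonzero rows, each of weight at least $D_t$. No gaps.
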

	The following corollary is from \cite{HLR09}.
	\begin{cor}\label{cor:nested}
		If we additionally assume that $C_1\supseteq \cdots \supseteq C_s$, equality occurs in the bound in \eqref{eq:min_dist_bound}.
	\end{cor}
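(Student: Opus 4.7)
The lower bound $d(C)\geq \min\{D_1d_1,\ldots,D_sd_s\}$ is already given by Proposition \ref{prop:min_dist_bound}, so the whole content of the corollary is the matching upper bound $d(C)\leq \min_i D_i d_i$ under the nestedness hypothesis. My plan is to exhibit, for the index $j$ achieving the minimum, an explicit codeword of $C$ whose Hamming weight equals $D_j d_j$.

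Fix $j\in\{1,\ldots,s\}$ with $D_j d_j=\min_i D_i d_i$. Pick a minimum-weight codeword $\vek{c}_j\in C_j$ (so $w(\vek{c}_j)=d_j$) and a minimum-weight codeword $\vek{a}=(a_1,\ldots,a_l)\in C_{A_j}$ of the row space of $A_j$ (so $w(\vek{a})=D_j$). Writing $\vek{a}=\sum_{i=1}^{j}\lambda_i\,\vek{r}_i$ as a linear combination of the first $j$ rows $\vek{r}_i$ of $A$, I set the constituent words to be $\vek{c}'_i=\lambda_i\vek{c}_j$ for $i\le j$ and $\vek{c}'_i=\vek{0}$ for $i>j$. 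The crucial point where the nestedness hypothesis enters is exactly here: since $C_1\supseteq\cdots\supseteq C_j$, the codeword $\vek{c}_j$ lies in every $C_i$ with $i\le j$, so each $\vek{c}'_i$ is a legitimate element of $C_i$.

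Now I use the vector representation \eqref{eq:vector_rep} of the resulting codeword $\vek{c}=[\vek{c}'_1,\ldots,\vek{c}'_s]A\in C$. For the $k$-th block ($k=1,\ldots,l$) it reads
\begin{equation*}
\sum_{i=1}^{s}a_{ik}\vek{c}'_i \;=\; \sum_{i=1}^{j}\lambda_i a_{ik}\,\vek{c}_j \;=\; \Bigl(\sum_{i=1}^{j}\lambda_i (\vek{r}_i)_k\Bigr)\vek{c}_j \;=\; a_k\,\vek{c}_j,
\end{equation*}
so $\vek{c}=(a_1\vek{c}_j,a_2\vek{c}_j,\ldots,a_l\vek{c}_j)$. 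Each block contributes weight $d_j$ when $a_k\neq 0$ and weight $0$ otherwise, giving $w(\vek{c})=w(\vek{a})\cdot w(\vek{c}_j)=D_j d_j$.

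Hence $d(C)\le D_j d_j=\min_i D_i d_i$, which combined with Proposition \ref{prop:min_dist_bound} yields equality. There is no real obstacle in this argument; the only subtle point is making sure the constructed constituents $\vek{c}'_i$ belong to the respective $C_i$, and this is precisely what the nestedness hypothesis $C_1\supseteq\cdots\supseteq C_s$ delivers for free.
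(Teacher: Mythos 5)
Your argument is correct: the explicit codeword $(a_1\vek{c}_j,\ldots,a_l\vek{c}_j)$ built from a minimum-weight word of $C_j$ and a minimum-weight word of $C_{A_j}$ does have weight exactly $D_jd_j$, and nestedness is used precisely where you say it is, to place $\lambda_i\vek{c}_j$ in $C_i$ for $i\le j$. The paper itself states this corollary without proof, citing \cite{HLR09}; your construction is the standard argument from that reference, so there is nothing to flag.
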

	The dual of a matrix-product code is also a matrix-product code, if we make some assumptions on the matrix $A$, as it was noted in \cite{BN01}.
	\begin{prop}\label{prop:dual}
	Let $C=[C_1,C_2,\ldots,C_s]A$ be a matrix product code. If $A$ is an invertible square matrix then 
	\begin{align*}
		C^\perp=[C_1^\perp,C_2^\perp,\ldots,C_s^\perp ] (A^{-1})^T
	\end{align*}
	Additionally, if $J$ is the $s\times s$ matrix given by
	\begin{align*}
		J=\begin{bmatrix}
		0 & \cdots & 0 & 1\\
		0 & \cdots & 1 & 0\\
		\vdots & \iddots & \vdots & \vdots\\
		1 & \cdots & 0 & 0\\
		\end{bmatrix}
	\end{align*}
	the dual can be described as
	\begin{align*}
		C^\perp=[C_s^\perp,C_{s-1}^\perp,\ldots,C_1^\perp ] (J(A^{-1})^T).
	\end{align*}
	\end{prop}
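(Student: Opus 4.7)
The proof splits naturally into showing containment and then matching dimensions. First I would note that by Proposition \ref{prop:matrix_prod_codes}, $C$ is an $[ns, k_1+\cdots+k_s]$ code, so $C^{\perp}$ has dimension $ns - \sum_i k_i = \sum_i (n-k_i)$. On the other hand, since $A$ is invertible so is $(A^{-1})^T$, hence $(A^{-1})^T$ has rank $s$, and a second application of Proposition \ref{prop:matrix_prod_codes} to $[C_1^\perp,\ldots,C_s^\perp](A^{-1})^T$ gives exactly the same dimension $\sum_i (n-k_i)$. So once I prove one inclusion, equality will follow.

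To prove $[C_1^\perp,\ldots,C_s^\perp](A^{-1})^T \subseteq C^\perp$, I would pick a generic codeword $\vek{c}=\left(\sum_{i=1}^s a_{i1}\vek{c}_i,\ldots,\sum_{i=1}^s a_{is}\vek{c}_i\right)\in C$, where $\vek{c}_i\in C_i$, and a generic codeword $\vek{d}=\left(\sum_{i=1}^s b_{i1}\vek{d}_i,\ldots,\sum_{i=1}^s b_{is}\vek{d}_i\right)$ of the candidate dual, where $\vek{d}_i\in C_i^\perp$ and $B=[b_{ij}]=(A^{-1})^T$. Using the fact that the standard inner product on $\mathbb{F}_q^{ns}$ decomposes as a sum over the $s$ blocks of length $n$, compute
\begin{align*}
\langle \vek{c},\vek{d}\rangle
= \sum_{j=1}^s \left\langle \sum_{i=1}^s a_{ij}\vek{c}_i, \sum_{k=1}^s b_{kj}\vek{d}_k \right\rangle
= \sum_{i,k=1}^s \left(\sum_{j=1}^s a_{ij}b_{kj}\right)\langle \vek{c}_i,\vek{d}_k\rangle.
\end{align*}
The inner sum $\sum_j a_{ij}b_{kj}$ is the $(i,k)$-entry of $AB^T = A A^{-1} = I$, hence equals $\delta_{ik}$, reducing the expression to $\sum_i \langle \vek{c}_i,\vek{d}_i\rangle = 0$ since $\vek{d}_i \in C_i^\perp$. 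Combined with the dimension count, this proves the first formula.

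For the second formula, observe that $J$ is the permutation matrix reversing row order, so $J(A^{-1})^T$ is obtained from $(A^{-1})^T$ by reversing its rows. Reversing the order of the constituent codes from $(C_1^\perp,\ldots,C_s^\perp)$ to $(C_s^\perp,\ldots,C_1^\perp)$ and simultaneously reversing the rows of the defining matrix yields the same matrix-product code by definition (each row of the defining matrix is paired with the corresponding constituent code in the same way). Hence $[C_s^\perp,\ldots,C_1^\perp](J(A^{-1})^T) = [C_1^\perp,\ldots,C_s^\perp](A^{-1})^T = C^\perp$.

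The only subtle point is keeping straight which entry of which matrix contributes at each index in the inner product; once the identification $AB^T=I$ is in place the calculation collapses cleanly, so I do not expect any real obstacle beyond careful bookkeeping.
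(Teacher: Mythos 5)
Your proof is correct. The paper does not actually prove Proposition \ref{prop:dual} --- it is quoted from \cite{BN01} --- and your argument (block-wise inner product computation giving the inclusion via $AB^T=I$, equality by the dimension count from Proposition \ref{prop:matrix_prod_codes}, and the observation that simultaneously permuting the constituent codes and the rows of the defining matrix leaves the matrix-product code unchanged) is exactly the standard one.
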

	Notice that with regard to Proposition \ref{prop:min_dist_bound} the last expression is often more useful since $d(C_i^\perp)$ will often decrease when $i$ increases. 
	
	Now, we turn our attention to products and squares of codes. We denote by $*$ the component-wise product of two vectors. That is, if $\vek{x}=(x_1,\ldots,x_n)$ and $\vek{y}=(y_1,\ldots,y_n)$, then $\vek{x}*\vek{y}=(x_1y_1,\ldots,x_ny_n)$. With this definition in mind, we define the product of two linear codes.
	
	\begin{defi}[Component-wise (Schur) products and squares of codes]
		Given two linear codes $C,C'\subseteq \mathbb{F}_q^n$ we define their component-wise product, denoted by $C*C'$, as
		\begin{align*}
			C*C'=\langle \{\vek{c}*\vek{c}'\mid \vek{c}\in C, \vek{c}'\in C'\}\rangle.
		\end{align*}
		
The square of a code $C$ is $C^{*2}=C*C$.		
	\end{defi}
	First note that the length of the product is the same as the length of the original codes. Regarding the other parameters (dimension and minimum distance) we enumerate some known results only in the case of the squares $C^{*2}$ since this will be our primary focus.
		
	If 
	\begin{align*}
		G=\begin{bmatrix}
		\vek{g}_1\\
		\vek{g}_2\\
		\vdots\\
		\vek{g}_k
		\end{bmatrix}
	\end{align*}
	is a generator matrix for $C$, then $\{\vek{g}_i*\vek{g}_j\mid 1\leq i\leq j\leq n\}$ is a generating set for $C^{*2}$. However, it might not be a basis since some of the vectors might be linearly dependent. If additionally, a submatrix consisting of $k$ columns of $G$ is the identity, the set $\{ \vek{g}_i*\vek{g}_i\}$ consists of $k$ linearly independent vectors. Since there is always a generator matrix satisfying this, this implies $k\leq \dim(C^{*2})\leq \frac{k(k+1)}{2}$, where $k=\dim(C)$, and $d(C^{*2})\leq d(C)$. 
	
In most cases, however, $d(C^{*2})$ is much smaller than $d(C)$. For example, the Singleton bound for squares \cite{R13b} states that $d(C^{*2})\leq \max\{1,n-2k+2\}$ (which is much restrictive than the Singleton bound for $C$, which states that $d(C)\leq n-k+1$). Additionally, the codes for which $d(C^{*2})=n-2k+2$ have been characterized in \cite{MZ15}, where it was shown that essentially only Reed-Solomon codes, certain direct sums of self-dual codes, and some degenerate codes have this property. Furthermore, it is shown in \cite{CCMZ15} that taking a random code with dimension $k$ the dimension of $C^{*2}$ will with high probability be $\min\{n, \frac{k(k+1)}{2}\}$. Therefore, often $\dim(C^{*2})\gg \dim(C)$ and hence typically $d(C^{*2})\ll d(C)$. 

\section{The $(u,u+v)$-Construction}\label{sec:u,u+v}
In this section, we will consider one of the most well-known matrix-product codes, namely the $(u,u+v)$-construction. We obtain this construction when we let 
	\begin{align}\label{eq:u,u+v_matrix}
		A=\begin{bmatrix}
		1 & 1 \\ 0 & 1
		\end{bmatrix}\in \mathbb{F}_q^{2\times 2}
	\end{align}
	be the defining matrix. Note that if $C=\left[C_1,C_2\right]A$ then $d(C)\geq \min\{2d(C_1),d(C_2)\}$ and $d(C^\perp)\geq \min \{2d(C_2^\perp),d(C_1^\perp)\}$. This can easily be deduced from Propositions \ref{prop:min_dist_bound} and \ref{prop:dual} by constructing $J(A^{-1})^T$.
	
	In the following theorem, we will determine the product of two codes $C$ and $C'$ when both codes come from the $(u,u+v)$-construction. We will use the notation $C+C'$ to denote the smallest linear code containing both $C$ and $C'$. 
	\begin{thm}\label{thm_u_u+v_con}
		Let $C_1,C_2,C_1',C_2' \subseteq \mathbb{F}_q^n$ be linear codes. Furthermore, let $A$ be as in \eqref{eq:u,u+v_matrix} and denote by $C=[C_1,C_2]A$ and by $C'=[C_1',C_2']A$. Then 
		\begin{align*}
		C*C'=[C_1*C_1',C_1*C_2'+C_2*C_1'+C_2*C_2']A.
		\end{align*}
	\end{thm}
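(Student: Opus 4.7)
The plan is to prove the two inclusions separately, using the explicit description of codewords given by the $(u,u+v)$-construction. With $A$ as in \eqref{eq:u,u+v_matrix}, any $\vek{c}\in C$ has the form $(\vek{c}_1,\vek{c}_1+\vek{c}_2)\in \mathbb{F}_q^{2n}$ with $\vek{c}_i\in C_i$, and similarly any $\vek{c}'\in C'$ has the form $(\vek{c}_1',\vek{c}_1'+\vek{c}_2')$ with $\vek{c}_i'\in C_i'$.

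For the inclusion $C*C'\subseteq [C_1*C_1',\,C_1*C_2'+C_2*C_1'+C_2*C_2']A$, I would take a generating product $\vek{c}*\vek{c}'$ for $\vek{c}\in C$, $\vek{c}'\in C'$, compute it block-by-block, and expand
\begin{align*}
(\vek{c}_1+\vek{c}_2)*(\vek{c}_1'+\vek{c}_2')=\vek{c}_1*\vek{c}_1'+\vek{c}_1*\vek{c}_2'+\vek{c}_2*\vek{c}_1'+\vek{c}_2*\vek{c}_2'.
\end{align*}
The product $\vek{c}*\vek{c}'$ then takes the form $(\vek{u},\vek{u}+\vek{v})$ with $\vek{u}=\vek{c}_1*\vek{c}_1'\in C_1*C_1'$ and $\vek{v}=\vek{c}_1*\vek{c}_2'+\vek{c}_2*\vek{c}_1'+\vek{c}_2*\vek{c}_2'$ in $C_1*C_2'+C_2*C_1'+C_2*C_2'$, so it lies in the right-hand side. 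Since such products span $C*C'$, the inclusion follows.

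For the reverse inclusion $\supseteq$, I would show that each of the four "building blocks" $\vek{c}_1*\vek{c}_1'$, $\vek{c}_1*\vek{c}_2'$, $\vek{c}_2*\vek{c}_1'$, $\vek{c}_2*\vek{c}_2'$ appears (appropriately placed) in $C*C'$ by choosing suitable zero components. Concretely, $(\vek{c}_1,\vek{c}_1)\in C$ (take $\vek{c}_2=0$) and $(0,\vek{c}_2)\in C$ (take $\vek{c}_1=0$), and analogously for $C'$. Multiplying these pairwise gives that $(\vek{c}_1*\vek{c}_1',\vek{c}_1*\vek{c}_1')$, $(0,\vek{c}_1*\vek{c}_2')$, $(0,\vek{c}_2*\vek{c}_1')$ and $(0,\vek{c}_2*\vek{c}_2')$ all lie in $C*C'$. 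Taking $\mathbb{F}_q$-linear combinations, every element $(\vek{u},\vek{u}+\vek{v})$ of the right-hand side is realised: the summands $(0,\vek{v})$ come from the last three types and the "diagonal" contribution $(\vek{u},\vek{u})$ comes from the first type.

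The proof is essentially routine expansion of the product; the only point that requires care is to keep the block structure straight and to note that the "off-diagonal" products $\vek{c}_1*\vek{c}_2'$ and $\vek{c}_2*\vek{c}_1'$ also end up in the second constituent of the resulting matrix-product code, which is precisely why the second constituent is the sum $C_1*C_2'+C_2*C_1'+C_2*C_2'$ rather than just $C_2*C_2'$. No subtle obstacle is expected; the argument is a direct verification.
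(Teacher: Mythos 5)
Your proof is correct and follows essentially the same route as the paper's: the paper phrases the argument via the block structure of the product generator matrix $G*G'$, while you phrase it via spanning codewords and two inclusions, but the underlying decomposition of $(\vek{c}_1+\vek{c}_2)*(\vek{c}_1'+\vek{c}_2')$ into the four constituent products is identical. Both arguments are complete and there is no gap.
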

	\begin{proof}
	Let $G_1,G_2,G_1',G_2'$ be generator matrices for $C_1,C_2,C_1',C_2'$ respectively. By Proposition \ref{prop:matrix_prod_codes}, we have that
	\begin{align*}
		G=\begin{bmatrix}
		G_1 & G_1\\
		0 & G_2
		\end{bmatrix}, \quad G'=\begin{bmatrix}
		G_1' & G_1'\\
		0 & G_2'
		\end{bmatrix}
	\end{align*}
	are generator matrices for $C$ and $C'$ respectively. A generator matrix for $C*C'$ can be obtained by making the componentwise products of all the rows in $G$ with all the rows in $G'$ and afterwards removing all linearly dependent rows. We denote by $G*G'$ the matrix consisting of all componentwise products of rows in $G$ with rows in $G'$. Then
	\begin{align*}
		G*G'=\begin{bmatrix}
		G_1*G'_1 & G_1*G_1'\\ 0 & G_1*G_2'\\ 0 & G_2*G_1'\\ 0 & G_2*G_2'
		\end{bmatrix}.
	\end{align*}
The set of rows in $G_i*G_j'$ is a generating set for $C_i*C'_j$. Hence, by removing linearly dependent rows we obtain a generator matrix of the form
	\begin{align*}
		\tilde{G}=\begin{bmatrix}
		\tilde{G}_1 & \tilde{G}_1\\ 0 & \tilde{G}_2
		\end{bmatrix},
	\end{align*}
where $\tilde{G}_1$ is a generator matrix for $C_1*C_1'$, and $\tilde{G}_2$ for $C_1*C_2'+C_2*C_1'+C_2*C_2'$. By using Proposition \ref{prop:matrix_prod_codes} once again, we see that $\tilde{G}$ is a generator matrix for the code $[C_1*C_1',C_1*C_2'+C_2*C_1'+C_2*C_2']A$	proving the theorem. 
	\end{proof}
	The following corollary consider the square of a code from the $(u,u+v)$-construction, and in the remaining of the paper the focus will be on squares. 
	\begin{cor}\label{cor_u_u+v_con}
		Let $C_1,C_2 \subseteq \mathbb{F}_q^n$ be linear codes.	Furthermore, let $A$ be as in \eqref{eq:u,u+v_matrix} and denote by $C=[C_1,C_2]A$. Then 
		\begin{align*}
		C^{*2}=[C_1^{*2},(C_1+C_2)*C_2]A,
		\end{align*}
		and we have that
		\begin{align}\label{eq:min_dist:u_u_plus_v_not_nested}
			d(C^{*2})\geq \min\{2d(C_1^{*2}),d((C_1+C_2)*C_2)\}.
		\end{align}
		Additionally, if $C_2\subseteq C_1$ we obtain
		\begin{align*}
			C^{*2}=[C_1^{*2},C_1*C_2]A,
		\end{align*}
		and we have that
		\begin{align}\label{eq:min_dist_u_u_plus_n_nested}
			d(C^{*2})=\min\{2d(C_1^{*2}),d(C_1*C_2)\}.
		\end{align}
	\end{cor}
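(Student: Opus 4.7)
The plan is to derive the corollary directly from Theorem \ref{thm_u_u+v_con} by specialising to $C'=C$, and then to obtain the distance statements from Proposition \ref{prop:min_dist_bound} together with Corollary \ref{cor:nested}.

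First I would set $C_1'=C_1$ and $C_2'=C_2$ in Theorem \ref{thm_u_u+v_con}, which immediately yields
\[
C^{*2}=[C_1^{*2},\,C_1*C_2+C_2*C_1+C_2^{*2}]A.
\]
Since component-wise product is commutative, $C_1*C_2=C_2*C_1$, so the second constituent simplifies to $C_1*C_2+C_2^{*2}$. I would then identify this with $(C_1+C_2)*C_2$: the inclusion $\supseteq$ follows from $(\vek{c}_1+\vek{c}_2)*\vek{c}_2'=\vek{c}_1*\vek{c}_2'+\vek{c}_2*\vek{c}_2'$ with $\vek{c}_1\in C_1,\vek{c}_2,\vek{c}_2'\in C_2$, and $\subseteq$ is clear since both summands lie inside $(C_1+C_2)*C_2$. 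This gives the first asserted equality.

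For the bound \eqref{eq:min_dist:u_u_plus_v_not_nested}, I would apply Proposition \ref{prop:min_dist_bound} to the matrix-product representation just obtained. With $A$ as in \eqref{eq:u,u+v_matrix}, the code $C_{A_1}$ is spanned by $(1,1)$, hence $D_1=2$, while $C_{A_2}=\mathbb{F}_q^2$ gives $D_2=1$; substituting these into \eqref{eq:min_dist_bound} gives the claimed lower bound.

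For the nested case $C_2\subseteq C_1$, the identity $C_1+C_2=C_1$ reduces $(C_1+C_2)*C_2$ to $C_1*C_2$, yielding the stated form of $C^{*2}$. Now the two constituent codes $C_1^{*2}$ and $C_1*C_2$ are themselves nested: multiplying the inclusion $C_2\subseteq C_1$ component-wise by $C_1$ gives $C_1*C_2\subseteq C_1^{*2}$. Hence Corollary \ref{cor:nested} applies and turns the inequality \eqref{eq:min_dist_bound} into the equality \eqref{eq:min_dist_u_u_plus_n_nested}. I do not anticipate any real obstacle here; the only point that requires a moment's care is verifying the identification $C_1*C_2+C_2^{*2}=(C_1+C_2)*C_2$, and checking the nestedness of the new constituent codes in order to invoke Corollary \ref{cor:nested}.
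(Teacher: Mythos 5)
Your proposal is correct and follows essentially the same route as the paper: specialise Theorem \ref{thm_u_u+v_con} to $C'=C$, rewrite $C_1*C_2+C_2*C_1+C_2^{*2}$ as $(C_1+C_2)*C_2$, and then invoke Proposition \ref{prop:min_dist_bound} and Corollary \ref{cor:nested}. The extra details you supply (the explicit verification that $(C_1+C_2)*C_2=C_1*C_2+C_2^{*2}$, the values $D_1=2$, $D_2=1$, and the nesting $C_1*C_2\subseteq C_1^{*2}$ needed for Corollary \ref{cor:nested}) are all correct and are left implicit in the paper's proof.
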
	
	\begin{proof}
	The results follows by setting $C'_i=C_i$ in Theorem \ref{thm_u_u+v_con} implying $C'=C$, and we obtain that
	\begin{align*}
		C^{*2}=[C_1*C_1,C_1*C_2+C_2*C_1+C_2*C_2]A=[C_1^{*2},(C_1+C_2)*C_2]A.
	\end{align*}
	If $C_2\subseteq C_1$ we have $C_1+C_2=C_1$. The bound in \eqref{eq:min_dist:u_u_plus_v_not_nested} follows directly from Proposition \ref{prop:min_dist_bound}, and \eqref{eq:min_dist_u_u_plus_n_nested} follows by Corollary \ref{cor:nested}.
	\end{proof}
	
	\section{Constructions from Binary Cyclic Codes}\label{sec:cyclic}
In this section, we exemplify what parameters we can achieve for $C$ and $C^{*2}$ when we use the $(u,u+v)$-construction together with cyclic codes as constituent codes. We start by presenting some basics of cyclic codes. 

Cyclic codes are linear codes which are invariant under cyclic shifts. That is, if $\vek{c}=(c_0,c_1,\ldots,c_{n-2},c_{n-1})$ is a codeword then $(c_{n-1},c_0,c_1,\ldots,c_{n-2})$ is as well. We will assume that $\gcd(n,q)=1$. A cyclic code of length $n$ over $\mathbb{F}_q$ is isomorphic to an ideal in $R=\mathbb{F}_q[x]/\langle x^n-1\rangle$ generated by a polynomial $g$, where $g|x^n-1$. The isomorphism is given by
	\begin{align*}
		c_0+c_1x+\ldots+c_{n-1}x^{n-1}\mapsto (c_0,c_1,\ldots,c_{n-1}),
	\end{align*}
	and we notice that a cyclic shift is represented by multiplying by $x$. The cyclic code generated by $g$ has dimension $n-\deg g$. To bound the minimum distance of the code, we introduce the $q$-cyclotomic cosets modulo $n$.
	\begin{defi}[$q$-cyclotomic coset modulo $n$]
		Let $a\in \mathbb{Z}/n\mathbb{Z}$. Then the $q$-cyclotomic coset modulo $n$ of $a$ is given by
		\begin{align*}
			[a]=\{aq^j \bmod n \mid j\geq 0\}.
		\end{align*}
	\end{defi}
	Now let $\beta^n=1$ and $\beta^k\neq 1$ for $1\leq k\leq n$, meaning that $\beta$ is a primitive $n$-th root of unity in an algebraic closure of $\mathbb{F}_q$. Since $g|x^n-1$ every root of $g$ must be of the form $\beta^j$ for some $j\in\{0,1,\ldots,n-1\}=\mathbb{Z}/n\mathbb{Z}$. This leads to the following definition which turns out to be useful in describing the parameters of a cyclic code.
	\begin{defi}[Defining and generating set]
		Denote by $J= \{j \in \mathbb{Z}/n\mathbb{Z} \mid g(\beta^j) = 0\}$ and $I=\{j \in \mathbb{Z}/n\mathbb{Z} \mid g(\beta^j) \neq 0\}$. Then we call $J$ the defining set and $I$ the generating set of the cyclic code $C$ generated by $g$.
	\end{defi}
	We remark that $g=\prod_{j\in J}(x-\beta^j)=(x^n-1)/\prod_{i\in I}(x-\beta^i)$, implying that $|I|$ is the dimension of the cyclic code generated by $g$. We note that $I$ and $J$ must be a union of $q$-cyclotomic cosets modulo $n$. Now we define the amplitude of $I$ as
	\begin{align*}
		\mathrm{Amp}(I)=\min\{i\in \mathbb{Z}\mid \exists c\in \mathbb{Z}/n\mathbb{Z} \text{ such that } I\subseteq\{c,c+1,\ldots,c+i-1\}\}.
	\end{align*}
	As a consequence of the BCH-bound, see for example \cite{C17}, we have that the minimum distance of the code generated by $g$ is greater than or equal to $n-\mathrm{Amp}(I)+1$. 

Hence, we see that both the dimension and minimum distance depend on $I$, and since $C$ is uniquely determined by $I$, we will use the notation $C(I)$ to describe the cyclic code generated by $g=(x^n-1)/(\prod_{i\in I}(x-\beta^i))$. To summarize, we have that $C(I)$ is a cyclic linear code with parameters
\begin{align}\label{eq:param_cyclic}
	[n, |I|, \geq n- \mathrm{Amp}(I)+1]_q.
\end{align}
The dual of a cyclic code is also a cyclic code. In fact, $C(I)^{\perp}=C(-J)$, where $-J=\{-j \bmod n\mid j\in J\}$. Clearly $\mathrm{Amp}(-J)=\mathrm{Amp}(J)$ which shows that $C(I)^{\perp}$ is a cyclic linear code with parameters
\begin{align}\label{eq:param_cyclic_dual}
	[n, |J|, \geq n- \mathrm{Amp}(J)+1]_q.
\end{align}
Furthermore, we remark that $\mathrm{Amp}(J)=n-\max\{s\mid \{i,i+1,\ldots,i+s-1\}\subseteq I \text{ for some } i\}$, i.e $n$ minus the size of the largest set of consecutive elements in $I$. We conclude that the minimum distance of $C(I)^{\perp}$ is strictly larger than the size of any set of consecutive elements in $I$.

We consider cyclic codes for the $(u,u+v)$-construction, and therefore we will need the following proposition.
	\begin{prop}\label{prop:sum_prod_cyclic}
		Let $I_1$ and $I_2$ be unions of $q$-cyclotomic cosets, and let $C(I_1)$ and $C(I_2)$ be the corresponding cyclic codes. Then 
		\begin{align*}
			C(I_1)+C(I_2)&=C(I_1\cup I_2)\\
			C(I_1)*C(I_2)&= C(I_1+I_2),
		\end{align*}
		where $I_1+I_2=\{a+b \bmod n\mid a\in I_1,b\in I_2\}$.
	\end{prop}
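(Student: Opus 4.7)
The plan separates into two parts. The first equality, $C(I_1)+C(I_2)=C(I_1\cup I_2)$, follows immediately from the generator polynomial formalism: if $g_i=\prod_{j\notin I_i}(x-\beta^j)$ generates $C(I_i)$, then $C(I_1)+C(I_2)$ corresponds to the ideal $\langle g_1\rangle+\langle g_2\rangle=\langle \gcd(g_1,g_2)\rangle$ in $R$, and since the $\beta^j$ are pairwise distinct, $\gcd(g_1,g_2)=\prod_{j\notin I_1\cup I_2}(x-\beta^j)$, which generates $C(I_1\cup I_2)$. (Here $I_1\cup I_2$ is again a union of $q$-cyclotomic cosets, so the notation $C(I_1\cup I_2)$ makes sense.)

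For the product $C(I_1)*C(I_2)=C(I_1+I_2)$, the natural strategy is to pass to the algebraic closure $\overline{\mathbb{F}_q}$, where the Schur product gets diagonalized by a Fourier-type basis. For each $a\in\mathbb{Z}/n\mathbb{Z}$ set $\vek{v}_a=(1,\beta^a,\beta^{2a},\ldots,\beta^{(n-1)a})\in\overline{\mathbb{F}_q}^n$. By Vandermonde, $\{\vek{v}_a\}_a$ is an $\overline{\mathbb{F}_q}$-basis of $\overline{\mathbb{F}_q}^n$, and directly from the definition $\vek{v}_a*\vek{v}_b=\vek{v}_{a+b}$. The geometric series $\vek{v}_a\cdot\vek{v}_b=\sum_{i=0}^{n-1}\beta^{i(a+b)}$ (using the standard bilinear form) equals $n$ if $a+b\equiv 0\pmod n$ and $0$ otherwise; since $\gcd(n,q)=1$, the value $n$ is nonzero.

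The core step is the identification $C(I)\otimes\overline{\mathbb{F}_q}=\mathrm{span}_{\overline{\mathbb{F}_q}}\{\vek{v}_a : a\in -I\}$, where $-I=\{-a\bmod n : a\in I\}$. To see this, note $c(\beta^j)=c\cdot\vek{v}_j$, so $C(I)\otimes\overline{\mathbb{F}_q}$ is precisely the orthogonal complement of $\mathrm{span}\{\vek{v}_j : j\notin I\}$ in $\overline{\mathbb{F}_q}^n$; the orthogonality relation from the preceding paragraph then picks out the $\vek{v}_a$ with $a\in -I$, and a dimension check ($|I|$ on both sides) closes the argument.

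Granted this description and the identity $\vek{v}_a*\vek{v}_b=\vek{v}_{a+b}$, over $\overline{\mathbb{F}_q}$ one has $(C(I_1)*C(I_2))\otimes\overline{\mathbb{F}_q}=\mathrm{span}\{\vek{v}_a*\vek{v}_b : a\in -I_1,\ b\in -I_2\}=\mathrm{span}\{\vek{v}_c : c\in -(I_1+I_2)\}=C(I_1+I_2)\otimes\overline{\mathbb{F}_q}$. Both $C(I_1)*C(I_2)$ and $C(I_1+I_2)$ are $\mathbb{F}_q$-subspaces of $\mathbb{F}_q^n$ whose base changes to $\overline{\mathbb{F}_q}$ agree, so they coincide already over $\mathbb{F}_q$. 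The only real conceptual obstacle is the intermediate identification of $C(I)\otimes\overline{\mathbb{F}_q}$ with the Fourier subspace indexed by $-I$; the sign flip is harmless because $-(I_1+I_2)=(-I_1)+(-I_2)$, and the descent from $\overline{\mathbb{F}_q}$ back to $\mathbb{F}_q$ is automatic since both sides are $\mathbb{F}_q$-defined.
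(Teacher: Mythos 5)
Your proposal is correct and follows essentially the same route as the paper: both arguments pass to an extension field where $C(I)$ becomes the span of the evaluation vectors $\vek{v}_a=\mathrm{ev}_{\beta}(X^a)$ for $a\in -I$ (the identification you reprove via the orthogonality relations and a dimension count is exactly what the paper imports from its references), exploit $\vek{v}_a*\vek{v}_b=\vek{v}_{a+b}$, and descend back to $\mathbb{F}_q$, while the sum is handled through the gcd of the generator polynomials in both cases. The one step you use implicitly, namely that base change commutes with the Schur product, $(C_1*C_2)\otimes\overline{\mathbb{F}_q}=(C_1\otimes\overline{\mathbb{F}_q})*(C_2\otimes\overline{\mathbb{F}_q})$, is cited by the paper from prior work and follows in one line by taking $\mathbb{F}_q$-bases of $C_1$ and $C_2$, so no genuine gap remains.
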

We obtain this result by describing the cyclic codes as a subfield subcode of an evaluation code and generalizing Theorem 3.3 in \cite{C17}. The proof of this proposition is very similar to the one in \cite{C17} and can be found in Appendix \ref{sec:app}.	The proposition implies the following corollary.
	\begin{cor}\label{cor:param_cyclic}
		Let $I_1$ and $I_2$ be unions of $q$-cyclotomic cosets, and let $C(I_1)$ and $C(I_2)$ be the corresponding cyclic codes. Then $C(I_1)*C(I_2)$ is an 
		\begin{align*}
			[n, |I_1+I_2|, \geq n- \mathrm{Amp}(I_1+I_2)+1]_q
		\end{align*}
		cyclic code.
	\end{cor}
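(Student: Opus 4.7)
The plan is to derive the corollary as an immediate consequence of Proposition \ref{prop:sum_prod_cyclic} together with the generic parameter formula \eqref{eq:param_cyclic} for cyclic codes described through their generating set. The only thing that requires a small verification is that the set $I_1+I_2$ is itself a legitimate generating set in the sense of the preceding discussion, i.e.\ that it is a union of $q$-cyclotomic cosets modulo $n$; once this is in place, \eqref{eq:param_cyclic} can be invoked verbatim with $I$ replaced by $I_1+I_2$.

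First I would apply Proposition \ref{prop:sum_prod_cyclic} directly to conclude $C(I_1)*C(I_2)=C(I_1+I_2)$, which in particular shows the product is cyclic and gives us the candidate generating set for reading off its parameters. Next I would check that $I_1+I_2$ is closed under multiplication by $q$ modulo $n$: if $a\in I_1$ and $b\in I_2$, then since each of $I_1,I_2$ is a union of $q$-cyclotomic cosets we have $qa\bmod n\in I_1$ and $qb\bmod n\in I_2$, whence $q(a+b)\bmod n=(qa+qb)\bmod n\in I_1+I_2$. Thus $I_1+I_2$ is a union of $q$-cyclotomic cosets and the notation $C(I_1+I_2)$ is consistent with the convention set up earlier in the section.

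With that justified, I would simply read off the three parameters: the length is $n$ because $C(I_1+I_2)\subseteq\mathbb{F}_q^n$ by construction; the dimension equals $|I_1+I_2|$ since the dimension of $C(I)$ is $|I|$; and the lower bound on the minimum distance is $n-\mathrm{Amp}(I_1+I_2)+1$ by the BCH-type bound recorded in \eqref{eq:param_cyclic}. None of these steps involves genuine difficulty; the only potential pitfall is forgetting to argue that $I_1+I_2$ is a union of cyclotomic cosets, which is the sole content of the proof beyond citing Proposition \ref{prop:sum_prod_cyclic} and formula \eqref{eq:param_cyclic}.
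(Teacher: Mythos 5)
Your proposal is correct and matches the paper's treatment: the corollary is presented there as an immediate consequence of Proposition \ref{prop:sum_prod_cyclic} combined with the parameter formula \eqref{eq:param_cyclic}. Your extra check that $I_1+I_2$ is closed under multiplication by $q$ modulo $n$ (so that the notation $C(I_1+I_2)$ is legitimate) is a correct and worthwhile detail that the paper leaves implicit.
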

Now, let $C(I_1)$ and $C(I_2)$ be two cyclic codes over $\mathbb{F}_q$ of length $n$, and let 
	\begin{align}\label{eq:cyclic_u_u+v}
		C=[C(I_1),C(I_2)]\begin{bmatrix}
		1 & 1 \\ 0 & 1
		\end{bmatrix}.
	\end{align}
	Then $C$ is a
	\begin{align*}
		[2n, |I_1|+|I_2|,\geq \min\{2(n-\mathrm{Amp}(I_1)+1),n-\mathrm{Amp}(I_2)+1\}]_q
	\end{align*}
	linear code. This is in fact a quasi-cyclic code of index $2$, see for instance \cite{HLR09, LF01}.
By combining Corollary \ref{cor_u_u+v_con} with Proposition \ref{prop:sum_prod_cyclic}, we obtain that
	\begin{align}\label{eq:inclusion}
		C^{*2}= [C(I_1+I_1), C(I_2+(I_1\cup I_2))]\begin{bmatrix}
		1 & 1 \\ 0 & 1
		\end{bmatrix}.
	\end{align}
And from Propositions \ref{prop:matrix_prod_codes} and \ref{prop:min_dist_bound}, and Corollary \ref{cor:param_cyclic}, we obtain that
	\begin{align*}
		\dim(C^{*2})= |I_1+I_1|+|I_2+(I_1\cup I_2)|,
	\end{align*}
and
	\begin{align}\label{eq:dist_square_cyclic}
		d(C^{*2})\geq \min\{2(n- \mathrm{Amp}(I_1+I_1)+1),n- \mathrm{Amp}(I_2+(I_1\cup I_2))+1) \}.
	\end{align}
	Therefore, it is of interest to find $I_1$ and $I_2$ such that the cardinalities of these sets are relatively large, implying a large dimension of $C$, while at the same time $\mathrm{Amp}(I_1+I_1)$ and $\mathrm{Amp}(I_2+(I_1\cup I_2))$ are relatively small, implying a large minimum distance on the square. 

To exemplify what parameters we can obtain we will use some specific cyclic codes from \cite{C17} based on the notion of $s$-restricted weights of cyclotomic cosets introduced in the same article. Let $n=q^r-1$ for some $r$ and for a number $t\in \{0,1,\ldots,n-1\}$ let $(t_{r-1},t_{r-2},\ldots,t_0)$ be its $q$-ary representation, i.e. $t=\sum_{i=0}^{r-1}t_iq^i$, where $t_i\in\{0,1,\ldots,q-1\}$. Then for an $s\leq r$ the $s$-restricted weight is defined as
\begin{align*}
	w_q^{(s)}(t)=\max_{i\in \{0,1,\ldots,r-1\}} \sum_{j=0}^{s-1} t_{i+j}.
\end{align*}
We will not go into details about these $s$-restricted weights but we refer the reader to \cite{C17} for more information. However, we remark that \cite{C17} proves that this weight notion satisfies $w_q^{(s)}(v)\leq w_q^{(s)}(t)+w_q^{(s)}(u)$ if $v=t+u$, and that $w_q^{(s)}(t)=w_q^{(s)}(u)$ whenever $t$ and $u$ are in the same cyclotomic coset. The latter implies that we can talk about the $s$-restricted weight of a cyclotomic coset.

Let $W_{r,s,m}$ denote the union of all cyclotomic cosets modulo $q^r-1$ with $s$-restricted weights lower than or equal to $m$. I.e.
\begin{align*}
	W_{r,s,m}=\{t\in \{0,1,\ldots,q^r-2\}\mid w_q^{(s)}(t)\leq m\}.
\end{align*}
We remark that the minimum distance of $C(W_{r,s,m})$ can be deduced using that an upper bound for $\mathrm{Amp}(W_{r,s,m})$ is $\max W_{r,s,m}+1$ and the maximum element of $W_{r,s,m}$ can be easily deduced \cite{C17}. Furthermore, the dimension of $C(W_{r,s,m})$ is simply the cardinality of $W_{r,s,m}$, which either can be counted for the specific choices of $r$, $s$, and $m$, or can be expressed as a recurrent sequence in $r$ (for a fixed selection of adequately small $s$ and $m$) using an argument involving counting closed walks of length $r$ in certain graph, see \cite{C17}. Finally, it is not hard to realize that the largest set of consecutive elements in $W_{r,s,m}$ is $\{0,1,\ldots, 2^{m+1}-2\}$ and thus, by the remarks about the dual distance after equation \eqref{eq:param_cyclic_dual}, we have that $d(C(W_{r,s,m})^\perp)\geq 2^{m+1}$. 

We define the code
\begin{align*}
	C=[C(W_{r,s,m_1}),C(W_{r,s,m_2})]\begin{bmatrix}
	1 & 1 \\ 0 & 1
	\end{bmatrix},
\end{align*}
where we let $m_1\geq m_2$. Note that $d(C^{\perp})\geq \min\{2\cdot 2^{m_2+1},2^{m_1+1}\}$. From \eqref{eq:inclusion} we conclude that
\begin{align}\label{eq:cyclic_square_s_weight}
	C^{*2}=[C(W_{r,s,m_1}+W_{r,s,m_1}),C(W_{r,s,m_1}+W_{r,s,m_2})]\begin{bmatrix}
	1 & 1 \\ 0 & 1
	\end{bmatrix}
\end{align}
since $W_{r,s,m_1}\cup W_{r,s,m_2}=W_{r,s,m_1}$. It is noted in \cite{C17} that $W_{r,s,m_i}+W_{r,s,m_j}=W_{r,s,m_i+m_j}$ does not hold in general, but the inclusion $W_{r,s,m_i}+W_{r,s,m_j}\subseteq W_{r,s,m_i+m_j}$ holds. However, we are able to determine the exact dimension for $C^{*2}$ in \eqref{eq:cyclic_square_s_weight} by computing $W_{r,s,m_1}+W_{r,s,m_i}$ for $i=1,2$. Additionally, when computed these, we can bound the minimum distance directly from \eqref{eq:dist_square_cyclic}. This is what we do in Table \ref{tab2} for the following choices. We present the parameters for $C$ and $C^{*2}$ when setting $q=2$, $s=5$, $m_1=2$, and $m_2=1$. In this case we have $d(C^\perp)\geq 8$ for each $r$. 
\begin{table}[h]
\centering
\begin{tabular}{|c|c|c|c|c|c|}
\hline
$r$ & $n$ & $\dim(C)$ & $d(C)\geq$ & $\dim(C^{*2})$ & $d(C^{*2})\geq $ \\
\hline
$5$ & $62$ & $22$ & $14$ & $57$ & $2$ %16 Optimal/2 Optimal 
\\
\hline
$6$ & $126$ & $29$ & $30$ & $99$ & $6$ %42 Optimal/8 Optimal
\\
\hline
$7$ & $254$ & $37$ & $62$ & $163$ & $14$ %90/Optimal/24 Optimal
\\
\hline
$8$ & $510$ & $54$ & $126$ & $348$ & $18$ %?/39 Optimal
\\
\hline
$9$ & $1022$ & $86$ & $238$ & $650$ & $38$\\
\hline
$10$ & $2046$ & $142$ & $462$ & $1319$ & $66$\\
\hline
$11$ & $4094$ & $233$ & $926$ & $2543$ & $134$\\
\hline
\end{tabular}
\caption{Parameters for $C$ and $C^{*2}$ using the $5$-restricted weight}
\label{tab2}
\end{table}

We make a comparison to the cyclic codes from \cite{C17}. They present codes constructed using the $3$-restricted weight with $m=1$ (Table 1 in \cite{C17}) and using the $5$-restricted weight with $m=2$ (Table 2 in \cite{C17}). Let any one of our new codes from Table \ref{tab2} have parameters $$(n,\dim(C),d(C^{*2}))=(n,k,d^*).$$ 
First we compare to Table 1 from \cite{C17}, where there always exists a code $C'$ with length $n+1$, $\dim(C')<k$, and $d((C')^{*2})>d^*$. Hence, our new codes have larger dimension but lower minimum distance for the square compared to these codes, for comparable lengths. On the other hand, in Table 2 from \cite{C17} there is a code $C'$ with length $n+1$ and $\dim(C')\geq k$ (i.e. the dimensions of the codes from \cite{C17} are larger than those in our table). However, the minimum distances of the squares for the codes in \cite{C17} satisfy
\begin{align*}
	d((C')^{*2})&=\begin{cases}
		d^*+1 \quad  \text{for } r=5,6,8,10,11\\ 
		d^*-5 \quad  \text{for } r=7,9
	\end{cases}.
\end{align*}
Thus, even though the dimension of our codes are lower than the ones from Table 2 in \cite{C17}, for $r=7$ and $r=9$ we obtain that $d^*>d((C')^{*2})$. 

Therefore, our results on matrix-product codes allow us to obtain codes with a different trade-off between $\dim C$ and $d(C^{*2})$ than those from \cite{C17}, where we can obtain a larger distance of the square at the expense of reducing the dimension with respect to one of the tables there, and viceversa with respect to the other.

\section{Other Matrix-Product Codes}\label{sec:other_cons}
	In this section, we consider squares of some other families of matrix-product codes. We start by determining the square of $C$ when $C$ is a matrix-product code where the defining matrix $A$ is Vandermonde.
	\begin{thm}\label{thm:Vandermonde}
		Let $C_0, C_1, \cdots, C_{s-1}$ be linear codes in $\mathbb{F}_q^n$. Furthermore, let 
		\begin{align*}
		V_q(s)=\begin{bmatrix}
		1 & 1 & \cdots& 1\\
		\alpha_1^1 & \alpha_2^1 & \cdots& \alpha_{q-1}^1\\
		\vdots & \vdots & & \vdots\\
		\alpha_1^{s-1} & \alpha_2^{s-1} & \cdots& \alpha_{q-1}^{s-1}
		\end{bmatrix},
		\end{align*}
		where the $\alpha_i$'s are distinct nonzero elements in $\mathbb{F}_q$ and $s\leq q-1$ is some positive integer. Denote by $C=[C_0,C_1,\ldots,C_{s-1}]V_q(s)$. Then 
		\begin{align*}
		C^{*2}=[\sum_{i+j=0} C_i*C_j,\sum_{i+j=1} C_i*C_j,\ldots,\sum_{i+j=\tilde{s}-1} C_i*C_j]V_q(\tilde{s})
		\end{align*}
		where $\tilde{s}=\min\{2s-1,q-1\}$ and the sums $i+j$ are modulo $q-1$.
	\end{thm}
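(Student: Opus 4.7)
The plan is to establish both inclusions of the claimed equality by working with the explicit vector representation in \eqref{eq:vector_rep} of matrix-product codewords, exploiting the fact that every $\alpha_k$ lies in $\mathbb{F}_q^\ast$ and therefore satisfies $\alpha_k^{q-1}=1$.

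For the inclusion $C^{*2}\subseteq [\sum_{i+j=0}C_i*C_j,\ldots,\sum_{i+j=\tilde{s}-1}C_i*C_j]V_q(\tilde{s})$, I would fix two codewords $\vek{c},\vek{c}'\in C$ and write the $k$-th block of each as $\sum_{i=0}^{s-1}\alpha_k^i\vek{c}_i$ and $\sum_{j=0}^{s-1}\alpha_k^j\vek{c}_j'$ respectively, with $\vek{c}_i\in C_i$ and $\vek{c}_j'\in C_j$. Expanding the componentwise product block-by-block yields a double sum $\sum_{i,j}\alpha_k^{i+j}(\vek{c}_i*\vek{c}_j')$, which I would regroup according to the value of $i+j$ modulo $q-1$; the identity $\alpha_k^{q-1}=1$ licenses this reduction and turns the expression into $\sum_{l=0}^{\tilde{s}-1}\alpha_k^l\vek{e}_l$, where $\vek{e}_l=\sum_{i+j\equiv l}\vek{c}_i*\vek{c}_j'\in\sum_{i+j\equiv l}C_i*C_j$. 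This is precisely the $k$-th block of a codeword of the claimed right-hand side, so $\vek{c}*\vek{c}'$ lies there, and by linearity so does all of $C^{*2}$.

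For the reverse inclusion, I would argue by generators. Each constituent $\sum_{i+j\equiv l}C_i*C_j$ is spanned by pure products $\vek{c}_i*\vek{c}_j'$ with $(i+j)\bmod(q-1)=l$, so by linearity of the matrix-product construction it suffices to realize, for each such $(i,j,l)$ and each such pair $(\vek{c}_i,\vek{c}_j')$, the codeword with $\vek{c}_i*\vek{c}_j'$ placed in slot $l$ and zeros elsewhere as an element of $C^{*2}$. I would produce this explicitly: take the codeword of $C$ whose constituents are all zero except the $i$-th, set to $\vek{c}_i$, and similarly a codeword with only the $j$-th constituent nonzero and equal to $\vek{c}_j'$; by the same block-wise expansion, their componentwise product has $k$-th block $\alpha_k^{i+j}(\vek{c}_i*\vek{c}_j')=\alpha_k^l(\vek{c}_i*\vek{c}_j')$, which matches the desired generator of the right-hand side.

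The only real subtlety, and the step I would be most careful about, is the wrap-around: when $2s-1>q-1$ the exponents $i+j$ run beyond $q-2$ and must be reduced modulo $q-1$, producing the cyclic convolution appearing in the theorem and forcing $\tilde{s}=q-1$; when $2s-1\leq q-1$ no such reduction is needed and $\tilde{s}=2s-1$. This is precisely where the Vandermonde structure together with $\alpha_k\neq 0$ is essential, and it is also what makes the resulting code a matrix-product code of the same ambient length $n(q-1)$ as $C$. Once this bookkeeping is in place, both inclusions follow directly from the expanded product formula without any further ingredients.
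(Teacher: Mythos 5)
Your proposal is correct and follows essentially the same route as the paper: both rest on expanding the blockwise product, using $\alpha_k^{q-1}=1$ to reduce the exponent $i+j$ modulo $q-1$, and grouping the resulting terms by $l=i+j \bmod (q-1)$ to recognize the Vandermonde matrix-product structure with constituents $\sum_{i+j\equiv l}C_i*C_j$. The only difference is presentational: the paper argues once with generator matrices and "removes linearly dependent rows," while you make that step explicit by proving the two inclusions separately, which if anything is slightly more careful.
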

	\begin{proof}
	 Let $G_0,G_1,\ldots,G_{s-1}$ be generator matrices of $C_0,C_1,\ldots, C_{s-1}$ respectively and let $G$ be a generator matrix for $C$. Using the same notation as in the proof of Theorem \ref{thm_u_u+v_con}, $G*G$ contains all rows of the form
	\begin{align*}
		(\alpha_1^{i+j}G_i*G_j,\ldots,\alpha_q^{i+j}G_i*G_j)
	\end{align*}
	for $i,j=0,1,\ldots,s-1$. Note that if $i+j\geq  q-1$ then $\alpha^{i+j}=\alpha^{i+j-q+1}$ and hence we can consider $i+j$ modulo $q-1$. Thus if $l\leq q-1$ and $i+j\equiv l \pmod{q-1}$ we could write the coefficients in front of $G_i*G_j$ as $\alpha_k^{l}$ for $k=1,2,\ldots,n$. Removing linearly dependent rows this results in a generator matrix for a matrix-product code of the form
	\begin{align}\label{eq:square_vandermonde}
		C^{*2}=[\sum_{i+j=0} C_i*C_j,\sum_{i+j=1} C_i*C_j,\ldots,\sum_{i+j=\tilde{s}-1} C_i*C_j]V_q(\tilde{s}),
	\end{align}
where again $i+j$ is considered modulo $q-1$.
	\end{proof}
	As we will show below, the fact that we obtain codes of the form $\sum_{i+j=l} C_i*C_j$ is especially helpful for determining the parameters of $C^{*2}$ in some cases. We remark that the same phenomenon occurs in the case of the $(u,u+v)$ construction but only if the codes $C_i$ are nested.
	
	Note also that $C(V_q(s)_i)$ (the linear code spanned by the first $i$ rows of $V_q(s)$) is a Reed Solomon code\footnote{A Reed-Solomon code is an MDS code meaning that it achieves the highest possible minimum distance for a given length and dimension. Thus the $D_i$'s are maximal and hence we obtain the best possible bound for the minimum distance we can hope for using the matrix-product construction.} of length $q-1$ and dimension $i$ and hence we have that $d(C(V_q(s)_i))=q-i$, for $i=1,2,\ldots,s$. Applying Proposition \ref{prop:min_dist_bound} with $D_i=d(C(V_q(s)_{i+1}))=q-i-1$ and $d_i=d(C_i)$ for $i=0,1,\ldots,s-1$ we obtain that $C$ is a
	\begin{align*}
		\left[(q-1)n,k_0+k_1+\cdots +k_{s-1},\geq \min_{i\in \{0,1,\cdots, s-1\}}\{(q-i-1)d_i\}\right]_q
	\end{align*}
	linear code, and $C^{*2}$ has minimum distance greater than or equal to
	\begin{align}\label{eq:min_dist_square_vandermonde}
		\min_{l\in \{0,1,\cdots, \tilde{s}-1\}}\Big\{(q-l-1)d\big(\sum_{i+j=l}C_i*C_j\big)\Big\}.
	\end{align}
	Even though the expression in \eqref{eq:square_vandermonde} may at first sight seem hard to work with, this is not the case if the $C_i$'s come from some specific families of codes. For example, Proposition \ref{prop:sum_prod_cyclic} tells us that $\sum_{i+j=l}C_i*C_j$ will again be a cyclic code if the $C_i$'s are cyclic and we will be able to determine its generating set from the generating sets of the $C_i$'s. 
	
	Additionally, one could consider the case where the $C_i$'s are Reed-Solomon codes or more generally algebraic geometric codes. Let $D=P_1+P_2+\cdots+P_n$ be a formal sum of rational places in a function field over $\mathbb{F}_q$ and let $G_i=r_{i,1}Q_1+r_{i,2}Q_2+\cdots+r_{i,m}Q_m$ where all the $Q_i$'s and $P_j$'s are different. An algebraic geometric code $C_i=C_{\mathcal{L}}(D,G_i)$ is the evaluation of the elements in the Riemann-Roch space $\mathcal{L}(G_i)$ in the places from $D$. It is then known that $C_i*C_j\subseteq C_{\mathcal{L}}(D,G_i+G_j)$ and $C_i+C_j\subseteq C_{\mathcal{L}}(D,H)$, where $H=\sum_{k=1}^m \max\{r_{i,k},r_{j,k}\}Q_k$. Hence, we can find a lower bound for $d(C^{*2})$ from \eqref{eq:min_dist_square_vandermonde} using the fact that from the above observations we can find algebraic geometric codes containing $\sum_{i+j=l}C_i*C_j$ where we can control the minimum distance. Furthermore, if $\deg G_i\geq 2g+1$ and $\deg G_j\geq 2g$, where $g$ is the genus of the function field, we obtain that $C_\mathcal{L}(D,G_i)*C_\mathcal{L}(D,G_j)=C_\mathcal{L}(D,G_i+G_j)$, see for instance \cite{CMP17}. Similarly, if we only consider one-point codes, meaning that $G_i=r_iQ$, we obtain that $C_i\subseteq C_j$ if $r_j\geq r_i$ and hence $C_i+C_j=C_j$. We exemplify some specific constructions with algebraic geometric codes, more specific one-point Hermitian codes, in the following example.

\begin{ex}
	We will not go into details about the Hermitian function field and codes, but we do mention that the Hermitian function field is defined over $\mathbb{F}_{q^2}$, it has genus $g=\frac{q(q-1)}{2}$, and it has $q^3+1$ rational places, where one of these places is the place at infinity. Denote the place at infinity by $Q$ and the remaining rational places by $P_i$, for $i=1,2,\ldots,q^3$, and let $D=\sum_{i=1}^{q^3}P_i$. Then a Hermitian code is given by the algebraic geometric code $C_r=C_{\mathcal{L}}(D,rQ)$. This is a $[q^3,r-g+1, q^3-r]_{q^2}$ code as long as $2g\leq r \leq q^3-q^2-1$, see for instance \cite{YK92}. Denote by
	\begin{align*}
		C(r,s)=[C_{r+s-1},C_{r+s-2},\ldots,C_r]V_{q^2}(s),
	\end{align*}
	where $2\leq s\leq\frac{q^2}{2}$ and $2g+1\leq r$. Furthermore, assume that $r+s \leq \frac{q^3-q^2+1}{2}$. With such a construction we have that
	\begin{align}\label{eq:Hermit_square}
		(C(r,s))^{*2}=[C_{2r+2s-2},C_{2r+2s-3},\ldots,C_{2r}]V_{q^2}(2s-1)=C(2r,2s-1)
	\end{align}
	from the observations about algebraic geometric codes above the example. Note that $2r+2s-2\leq q^3-q^2-1$ implying that all the Hermitian codes in \eqref{eq:Hermit_square} satisfy that their $r$ is lower than $q^3-q^2-1$. Hence,
	\begin{align*}
		d((C(r,s))^{*2})		&= \min_{i=0,1\ldots,2s-2} \{(q^2-i-1)(q^3-2r-2s+2+i)\}\\
									&=(q^2-2s+1)(q^3-2r),
	\end{align*}
	where the last equality follows from the following observations:
	\begin{align*}
		&(q^2-i-1)(q^3-2r-2s+2+i)\\
		&=(q^2-2s+1+(2s-2-i))(q^3-2r-(2s-2-i))\\
		&=(q^2-2s+1)(q^3-2r)+(2s-2-i)(q^3-2r-q^2+2s-2-2s+2+i)\\
		&=(q^2-2s+1)(q^3-2r)+(2s-2-i)(q^3-q^2-2r+i).
	\end{align*}
	From the restrictions on $r$ the last factor is positive. Hence, the minimum is attained when $i=2s-2$. Similar arguments show that $d(C(r,s))= (q^2-s)(q^3-r)$. Summing up, we have that $C(r,s)$ has parameters
	\begin{align*}
		\left[q^5-q^3,s\left(r-g+\frac{s+1}{2}\right), (q^2-s)(q^3-r)\right]_{q^2}
	\end{align*}
	and its square is the code $C(2r,2s-1)$ with parameters
	\begin{align*}
		\left[q^5-q^3, (2s-1)\left(2r-g+s \right), (q^2-2s+1)(q^3-2r)\right]_{q^2}.
	\end{align*}
	The parameters of this construction for $q=4$ can be found in Table \ref{tab:Hermit_gen1}.
\begin{table}[ht]
\centering
\begin{tabular}{|c|c|c|c|c|c|c|c|}
\hline
$\text{Field size}$ & $r$& $s$ & $n$ & $k$ & $d $ & $k^* $ & $ d^* $ \\
\hline
$16$ & $13$ & $2$ & $960$ & $17$ & $714$ & $66$ & $494$\\
\hline
$16$ & $16$ & $2$ & $960$ & $23$ & $672$ & $84$ & $416$\\
\hline
$16$ & $19$ & $2$ & $960$ & $29$ & $630$ & $102$ & $338$\\
\hline
$16$ & $22$ & $2$ & $960$ & $35$ & $588$ & $120$ & $260$\\
\hline
$16$ & $13$ & $4$ & $960$ & $38$ & $612$ & $168$ & $342$\\
\hline
$16$ & $16$ & $4$ & $960$ & $50$ & $576$ & $210$ & $288$\\
\hline
$16$ & $19$ & $4$ & $960$ & $62$ & $540$ & $252$ & $234$\\
\hline
$16$ & $13$ & $6$ & $960$ & $63$ & $510$ & $286$ & $190$\\
\hline
$16$ & $16$ & $6$ & $960$ & $81$ & $480$ & $352$ & $160$\\
\hline
$16$ & $13$ & $7$ & $960$ & $77$ & $459$ & $351$ & $114$\\
\hline
$16$ & $16$ & $7$ & $960$ & $98$ & $432$ & $429$ & $96$\\
\hline
$16$ & $13$ & $8$ & $960$ & $92$ & $408$ & $420$ & $38$\\
\hline
$16$ & $16$ & $8$ & $960$ & $116$ & $384$ & $510$ & $32$\\
\hline
\end{tabular}
\caption{Parameters for $C(s,r)$ and $(C(s,r))^{*2}$ with the construction from Theorem \ref{thm:Vandermonde} and Hermitian codes. We use the notation $k=\dim(C(s,r))$, $k^*=\dim((C(s,r))^{*2})$, $d=d(C(s,r))$, and $d^*=d((C(s,r))^{*2})$.}
\label{tab:Hermit_gen1}
\end{table}
~ 
\end{ex}
	Finally, we remark that the strategy used in the proof of Theorems \ref{thm_u_u+v_con} and \ref{thm:Vandermonde} is more or less identical. Thus, such strategy may also be used for other matrix-product codes. For example, we can consider the matrix-product codes with defining matrix
\begin{align*}
	MS_p=\left[\binom{p-i}{j-1} \bmod p\right]_{i,j=1,2,\ldots,p},
\end{align*}
where $p$ is a prime number. We remark that $\binom{n}{k}=0$ if $k>n$ and that these matrix-product codes were also considered in \cite{BN01}. We show that, as for Theorems \ref{thm_u_u+v_con} and \ref{thm:Vandermonde}, we can express the square of such matrix-product codes. 
\begin{thm}
		Let $p$ be a prime number and let $C_1\supseteq C_2\supseteq \cdots\supseteq C_{p}$ be linear codes in $\mathbb{F}_q^n$, where $\mathbb{F}_q$ has characteristic $p$. Denote $C=[C_1,C_2,\ldots,C_{p}]MS_p$ and 
	\begin{align*}
			MS_p^*=\left[\binom{p-i}{j-1}\binom{p-1}{j-1}\bmod p\right]_{i,j=1,2,\ldots,p} .
	\end{align*}
	Then
	\begin{align*}
		C^{*2}=[\sum_{i+j=2} C_i*C_j,\sum_{i+j=3} C_i*C_j,\ldots,\sum_{i+j=p+1} C_i*C_j]MS_p^*.
	\end{align*}
\end{thm}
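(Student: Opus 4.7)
The plan is to follow the same template as the proofs of Theorems \ref{thm_u_u+v_con} and \ref{thm:Vandermonde}: compute $G*G$ for a generator matrix $G$ of $C$ and then identify its row space with the right hand side. If $G_1,\dots,G_p$ are generator matrices of $C_1,\dots,C_p$, then by Proposition \ref{prop:matrix_prod_codes} the rows of $G$ are of the form $\bigl(\binom{p-i}{k-1}\vek{g}\bigr)_{k=1}^{p}$ for $\vek{g}$ a row of $G_i$; taking pairwise componentwise products, $C^{*2}$ is spanned by the vectors
\[
R(i_1,i_2,\vek{u}):=\Bigl(\binom{p-i_1}{k-1}\binom{p-i_2}{k-1}\,\vek{u}\Bigr)_{k=1}^{p},\qquad \vek{u}\in C_{i_1}*C_{i_2}.
\]
By the same proposition, if $D_m=\sum_{i_1+i_2=m}C_{i_1}*C_{i_2}$ then the claimed right hand side is spanned by $S(i,\vek{w}):=\bigl(\binom{p-i}{k-1}\binom{p-1}{k-1}\,\vek{w}\bigr)_{k=1}^{p}$ for $i=1,\dots,p$ and $\vek{w}\in D_{i+1}$.

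The key algebraic ingredient I would prove is the congruence $\binom{p-j}{k-1}\equiv(-1)^{k-1}\binom{j+k-2}{k-1}\pmod{p}$, which rewrites both scalar coefficient vectors as polynomial expressions in $k$. Since $\binom{j+k-2}{k-1}=k(k+1)\cdots(k+j-2)/(j-1)!$ is a polynomial of degree $j-1$ in $k$ with invertible leading coefficient whenever $j\le p$, the family $\{\binom{i+k-2}{k-1}\}_{i=1}^{m-1}$ is a basis for polynomials in $\mathbb{F}_p[k]$ of degree at most $m-2$; and for any $i_1,i_2\ge 1$ the products $\binom{j_1+k-2}{k-1}\binom{j_2+k-2}{k-1}$ with $j_1\le i_1$, $j_2\le i_2$ span polynomials of degree at most $i_1+i_2-2$ (use the subfamilies $j_1=1$ and $j_2=i_2$ to obtain one polynomial of each degree $0,1,\dots,i_1+i_2-2$).

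These two spanning facts yield both inclusions. For $C^{*2}\subseteq [D_2,\dots,D_{p+1}]MS_p^*$, I would fix $(i_1,i_2)$, set $m=i_1+i_2$, and expand the scalar vector attached to $R(i_1,i_2,\vek{u})$ in the rows of $MS_p^*$ (using that $MS_p^*$ is upper anti-triangular with nonzero anti-diagonal entries $\binom{p-1}{p-i}\ne 0$, hence invertible): when $m\le p+1$ the polynomial degree count forces the only nonzero coefficients to be on rows with index $i\le m-1$, while all $p$ rows may contribute when $m\ge p+2$. Since the hypothesis $C_1\supseteq\cdots\supseteq C_p$ entails the chain $D_2\supseteq D_3\supseteq\cdots\supseteq D_{p+1}\supseteq D_{p+2}\supseteq\cdots$, the vector $\vek{u}\in C_{i_1}*C_{i_2}\subseteq D_{\min(m,p+1)}$ lies in $D_{i+1}$ for every index $i$ where the coefficient is nonzero, and each summand of the expansion is then of the form $S(i,\cdot)$ in the target code. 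The reverse inclusion is symmetric: I would decompose $\vek{w}\in D_{i+1}$ as $\sum_{i_1+i_2=i+1}\vek{w}^{(i_1,i_2)}$ with $\vek{w}^{(i_1,i_2)}\in C_{i_1}*C_{i_2}$; for each piece, nesting enlarges the set of available source generators $R(j_1,j_2,\vek{w}^{(i_1,i_2)})$ to all pairs with $j_1\le i_1$, $j_2\le i_2$, and the second spanning fact (applied with $i_1+i_2=i+1\le p+1$, so all polynomials lie in $\mathbb{F}_p[k]_{\le p-1}$) expresses the scalar coefficient of $S(i,\vek{w}^{(i_1,i_2)})$ in terms of those of these $R$-vectors. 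The main obstacle is the bookkeeping needed to align the index ranges produced by the polynomial expansions with the nested containments among the $D_m$'s and to handle the case split $m\le p+1$ versus $m>p+1$; both reduce cleanly to the polynomial degree count in $\mathbb{F}_p[k]$ combined with the nesting hypothesis.
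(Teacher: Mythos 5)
Your proposal is correct, and at its core it runs on the same engine as the paper's proof---reading the columns $\binom{p-i}{k-1}$ as evaluations of polynomials of degree $i-1$ in the column index and using the nesting $C_1\supseteq\cdots\supseteq C_p$ to enlarge the set of admissible rows---but it is organized in a genuinely different way. The paper argues in a single pass: it replaces the rows of $G*G$ attached to $G_m*G_n$ by linear combinations as in \eqref{eq:linear_comb} and solves the triangular system \eqref{eq:linear_comb_MSp} degree by degree in $j$, producing the target coefficients $\binom{p-1}{j-1}\binom{p-m-n+1}{j-1}$ directly by row operations, without ever isolating the two inclusions or invoking the invertibility of $MS_p^*$. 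You instead make the polynomial structure explicit through the negation congruence $\binom{p-j}{k-1}\equiv(-1)^{k-1}\binom{j+k-2}{k-1}\pmod p$, note that the rows of $MS_p^*$ then form a degree-graded basis (one polynomial of each degree $0,\ldots,p-1$; anti-triangularity gives invertibility), and prove the two inclusions separately, using the chain $D_2\supseteq D_3\supseteq\cdots$ to place $\vek{u}$ in the correct constituent codes. What your route buys is that the reverse inclusion and the case $i_1+i_2>p+1$ are treated explicitly, points the paper handles only implicitly (its displayed ratio manipulation tacitly assumes $m+n\le p+1$, and the reversibility of the row replacements is not spelled out); what the paper's route buys is brevity, since the single system \eqref{eq:linear_comb_MSp} does all the work. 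In both arguments the essential ingredients coincide: degree counting for binomial-coefficient polynomials whose leading coefficients are units because all degrees stay below $p$, together with the nestedness hypothesis, which is exactly where both proofs would fail without it.
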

\begin{proof}
	Let $G_i$ be a generator matrix for $C_i$ and let $G$ be a generator matrix for $C$. As in the previous proofs we construct $G*G$, the matrix whose rows span $C^{*2}$. The rows in $G*G$ are given by
	\begin{align}\label{eq:coeff_GGp}
		\left[\binom{p-m}{j-1}\binom{p-n}{j-1}G_m*G_n\right]_{j=1,2,\ldots,p}
	\end{align}
	for $1\leq m\leq n\leq p$. We remark that the codes are nested and hence we might assume that rows in $G_m*G_n$ are also rows in $G_i*G_n$ when $m\geq i$.
	
	We will show that by doing row operations on $G*G$ we can obtain the rows
	\begin{align}\label{eq:coeff_MSP}
		\left[\binom{p-1}{j-1}\binom{p-m-n+1}{j-1}G_m*G_n\right]_{j=1,2,\ldots,p}.
	\end{align}
	Notice that the coefficients only depends on $m+n$, so proving that we can obtain \eqref{eq:coeff_MSP} by row operations proves the theorem. 
	
	In order to obtain this, we note that we can replace the rows corresponding to $G_m*G_n$ by any linear combination of the form
	\begin{align}\label{eq:linear_comb}
		\sum_{i=1}^m a_i\left[\binom{p-i}{j-1}\binom{p-n}{j-1}G_m*G_n\right]_{j=1,2,\ldots,p}.
	\end{align}
	 
Here we have used that, as remarked in the beginning of the proof, rows in $G_m*G_n$ are included in $G_i*G_n$ when $i\leq m$ and therefore the rows in the linear combination given by \eqref{eq:linear_comb} are included in those from \eqref{eq:coeff_GGp}. 
	
	Thus, in order to end the proof, we need to show that there exist coefficients $a_i\in \mathbb{F}_q$ such that 
	\begin{align}\label{eq:linear_comb_MSp}
		\sum_{i=1}^m a_i \binom{p-i}{j-1}\binom{p-n}{j-1}=\binom{p-1}{j-1}\binom{p-m-n+1}{j-1}, \text{ for } j=1,2,\ldots,p.
	\end{align}
	We observe that if $\binom{p-n}{j-1}=0$ then $\binom{p-n-m+1}{j-1}=0$. Hence, we only need to prove that \eqref{eq:linear_comb_MSp} holds for $j-1\leq p-n$. 
	
	Now note that
	\begin{align*}
		\binom{p-i}{j-1}&=\frac{(p-j)(p-j-1)\cdots(p-i-j+2)}{(p-1)(p-2)\cdots(p-i+1)}\binom{p-1}{j-1}, \quad \text{and}\\
		\binom{p-m-n+1}{j-1}&=\frac{(p-n-j+1)(p-n-j)\cdots(p-n-j-m+3)}{(p-n)(p-n-1)\cdots(p-n-m+2)}\binom{p-n}{j-1}		
	\end{align*}
	Plugging this into \eqref{eq:linear_comb_MSp}  and dividing by $\binom{p-1}{j-1}\binom{p-n}{j-1}$ we see that
	\begin{align*}
		\sum_{i=1}^m a_i\prod_{k=0}^{i-2}\frac{p-j-k}{p-1-k}=\prod_{k=0}^{m-2}\frac{p-n-j+1-k}{p-n-k}.
	\end{align*}
	The right hand side can be considered as a degree $m-1$ polynomial in the variable $j$. The left hand side is a sum of $a_i$ times a polynomial of degree $i-1$ in the variable $j$. Hence, there is a value $a_m$ such that the coefficients of $j^{m-1}$ on both sides coincide. If we set the value of $a_m$, we can then determine the value of $a_{m-1}$ such that the coefficients of $j^{m-2}$ coincide, and so on. In this way we can inductively choose the $a_i$'s such that \eqref{eq:linear_comb_MSp} holds. 
\end{proof}
	We remark that the condition that the codes are nested is essential in the proof since otherwise the linear combination in \eqref{eq:linear_comb} is not valid. 
	
	In summary, we have shown in this section that the squares of some of the most well-known matrix-product codes are again matrix-product codes of a form simple enough that we can relate their minimum distance to that of the squares and products of the constituent codes (in some cases we need those constituent codes to be nested). Furthermore, the strategies used in the proofs from the different constructions are similar, which also suggests that the same techniques may be used for showing similar results in the case of other families of matrix-product codes.
	
\newpage
\appendix
\section{Products and Sums of Cyclic Codes}\label{sec:app}
It is possible to describe a cyclic code as a subfield subcode of an evaluation code. For a set $M\subseteq \{1,\ldots, n-1\}$ and the extension field $\mathfrak{F}=\mathbb{F}_{q}(\mathcal{\beta})$ denote by
\begin{align*}
	\mathcal{P}(M)=\left\{\sum_{i\in M} a_iX^i\mid a_i\in \mathfrak{F}\right\}.
\end{align*}
Furthermore, let $\mathrm{ev}_{\beta}(f)=(f(1),f(\beta),\ldots,f(\beta^{n-1}))$ and define
\begin{align*}
	\mathcal{B}(M)=\{\mathrm{ev}_{\beta}(f)\mid f\in \mathcal{P}(M)\}\subseteq \mathfrak{F}^n.
\end{align*}
This is a linear code over $\mathfrak{F}$. We can obtain a linear code over $\mathbb{F}_q$ by taking the subfield subcode $\mathcal{B}(M)\cap \mathbb{F}_q^n$. Letting $M=-I=\{-i \bmod n\mid i \in I\}$, we obtain the cyclic code with generating set $I$, i.e. 
\begin{align*}
	C(I)=\mathcal{B}(-I)\cap \mathbb{F}_q^n,
\end{align*}
see for instance Lemma 2.22 in \cite{C17}. We generalize Theorem 3.3 in \cite{C17}, stating that $C(I)^{*2}=C(I+I)$, to $C(I_1)*C(I_2)=C(I_1+I_2)$ below. The proofs are almost identical to the ones in \cite{C17}.
\begin{lem}\label{lem:extension_product}
	Let $I_1$ and $I_2$ be cyclotomic cosets. Then
	\begin{align*}
		\mathcal{B}(-I_1)*\mathcal{B}(-I_2)=\mathcal{B}(-(I_1+I_2)),
	\end{align*}
	where $I_1+I_2=\{a+b \bmod n\mid a\in I_1,b\in I_2\}$.
\end{lem}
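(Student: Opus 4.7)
My plan is to exploit the fact that $\mathcal{B}(-I)$ is not just any $\mathfrak{F}$-linear code but a very structured one: it is the $\mathfrak{F}$-span of a concrete set of ``monomial evaluation'' vectors, and these vectors behave multiplicatively in a way that matches the set-sum $I_1+I_2$ on the nose.

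First I would unfold the definition to observe that $\mathcal{B}(-I) = \mathrm{span}_{\mathfrak{F}}\{\mathrm{ev}_\beta(X^{-i}) : i \in I\}$, simply because $\mathrm{ev}_\beta$ is $\mathfrak{F}$-linear and the monomials $\{X^{-i} : i \in I\}$ form an $\mathfrak{F}$-basis of $\mathcal{P}(-I)$. Next I would record the key identity that drives everything: for any integers $a,b$, componentwise multiplication of monomial evaluations gives
\begin{align*}
\mathrm{ev}_\beta(X^{-a}) * \mathrm{ev}_\beta(X^{-b}) = \bigl(\beta^{-(a+b)\cdot j}\bigr)_{j=0}^{n-1} = \mathrm{ev}_\beta(X^{-(a+b)}),
\end{align*}
where the exponent $-(a+b)$ is naturally taken modulo $n$ since $\beta^n = 1$.

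Having these two facts in hand, the proof essentially writes itself. Since the componentwise product is $\mathfrak{F}$-bilinear, the product code $\mathcal{B}(-I_1) * \mathcal{B}(-I_2)$ is the $\mathfrak{F}$-span of the products of basis vectors, i.e.\ the span of $\{\mathrm{ev}_\beta(X^{-i_1}) * \mathrm{ev}_\beta(X^{-i_2}) : i_1 \in I_1,\, i_2 \in I_2\}$. By the identity above this is the $\mathfrak{F}$-span of $\{\mathrm{ev}_\beta(X^{-k}) : k \in I_1 + I_2 \bmod n\}$, which is exactly $\mathcal{B}(-(I_1+I_2))$ by the first observation. The two inclusions are obtained in a single stroke from this equality of spanning sets.

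There is no real obstacle here; the only thing to watch is to make it clear that the spans are taken over the extension field $\mathfrak{F}$ (so that no subfield-subcode subtleties enter at this stage — that step is deferred to the companion lemma giving $C(I_1) * C(I_2) = C(I_1+I_2)$), and that the reduction of exponents modulo $n$ matches the definition $I_1 + I_2 = \{a+b \bmod n : a \in I_1, b \in I_2\}$, so that no spurious generators are introduced nor lost when the exponents wrap around.
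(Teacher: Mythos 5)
Your proof is correct and follows essentially the same route as the paper's: both arguments reduce to the identity $\mathrm{ev}_{\beta}(X^{-a})*\mathrm{ev}_{\beta}(X^{-b})=\mathrm{ev}_{\beta}(X^{-(a+b)})$ with exponents read modulo $n$, you merely package the two inclusions into a single equality of monomial spanning sets via bilinearity, whereas the paper writes them out separately. No gap.
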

\begin{proof}
	We start by proving the inclusion $\mathcal{B}(-I_1)*\mathcal{B}(-I_2)\subseteq \mathcal{B}(-(I_1+I_2))$. Let $\vek{v}\in \mathcal{B}(-I_1)*\mathcal{B}(-I_2)$. Then for $f_1\in \mathcal{P}(-I_1)$ and $f_2\in \mathcal{P}(-I_2)$, we have
	\begin{align*}
		\vek{v}=\mathrm{ev}_{\beta}(f_1)*\mathrm{ev}_{\beta}(f_2)=\mathrm{ev}_{\beta}(f_1f_2).
	\end{align*}		
	Since $f_1f_2\in\mathcal{P}(-(I_1+I_2))$ the inclusion follows. 
	
	For the other inclusion, let $\vek{w}\in \mathcal{B}(-(I_1+I_2))$. Then for $f=\sum_{i\in -(I_1+I_2)}a_iX^i\in \mathcal{P}(-(I_1+I_2))$, where $a_i\in \mathfrak{F}$, we have
	\begin{align*}
		\vek{w}=\mathrm{ev}_{\beta}(f)=\sum_{i\in -(I_1+I_2)}a_i \mathrm{ev}_{\beta}(X^i).
	\end{align*}
	Since $i\in -(I_1+I_2)$, there exists $j\in -I_1$ and $k\in -I_2$, such that $X^i=X^jX^k$, and therefore $ \mathrm{ev}_{\beta}(X^i)=\mathrm{ev}_{\beta}(X^j)*\mathrm{ev}_{\beta}(X^k)$. Hence, $\vek{w}$ is an $\mathfrak{F}$-linear combination of elements in $\mathcal{B}(-I_1)*\mathcal{B}(-I_2)$ showing that $\vek{w}\in \mathcal{B}(-I_1)*\mathcal{B}(-I_2)$. 
\end{proof}
With this lemma, we are able to proof Proposition \ref{prop:sum_prod_cyclic}.
	\begin{nonumprop}
		Let $I_1$ and $I_2$ be unions of $q$-cyclotomic cosets, and let $C(I_1)$ and $C(I_2)$ be the corresponding cyclic codes. Then 
		\begin{align*}
			C(I_1)+C(I_2)&=C(I_1\cup I_2),\\
			C(I_1)*C(I_2)&= C(I_1+I_2).
		\end{align*}
		where $I_1+I_2=\{a+b \bmod n\mid a\in I_1,b\in I_2\}$.
	\end{nonumprop}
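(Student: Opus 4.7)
The plan is to dispatch the two equalities separately, combining Lemma~\ref{lem:extension_product} with the dimension formula $\dim C(I) = |I|$ from \eqref{eq:param_cyclic}.

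For the sum identity, the containments $C(I_i) \subseteq C(I_1 \cup I_2)$ for $i=1,2$ are immediate from the evaluation description, since $\mathcal{P}(-I_i) \subseteq \mathcal{P}(-(I_1 \cup I_2))$ and hence $\mathcal{B}(-I_i) \subseteq \mathcal{B}(-(I_1 \cup I_2))$; intersecting with $\mathbb{F}_q^n$ gives $C(I_1)+C(I_2) \subseteq C(I_1 \cup I_2)$. To lift this to equality I would first verify $C(I_1) \cap C(I_2) = C(I_1 \cap I_2)$: at the extension-field level, $\mathcal{B}(-I_1) \cap \mathcal{B}(-I_2) = \mathcal{B}(-(I_1 \cap I_2))$ because $\mathrm{ev}_\beta$ is an $\mathfrak{F}$-linear isomorphism from $\mathfrak{F}[X]/(X^n-1)$ to $\mathfrak{F}^n$ and $\mathcal{P}(-I_1) \cap \mathcal{P}(-I_2) = \mathcal{P}(-(I_1 \cap I_2))$; intersecting with $\mathbb{F}_q^n$ descends this to the $C(\cdot)$ codes. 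Inclusion--exclusion and $\dim C(I) = |I|$ then give $\dim(C(I_1) + C(I_2)) = |I_1| + |I_2| - |I_1 \cap I_2| = |I_1 \cup I_2| = \dim C(I_1 \cup I_2)$, forcing equality.

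For the product identity, the inclusion $C(I_1)*C(I_2) \subseteq C(I_1+I_2)$ is immediate from Lemma~\ref{lem:extension_product}: any componentwise product of codewords in $C(I_1) \times C(I_2)$ lies in $\mathcal{B}(-I_1)*\mathcal{B}(-I_2) = \mathcal{B}(-(I_1+I_2))$ and also in $\mathbb{F}_q^n$. For the reverse inclusion I would argue by a dimension count via scalar extension. A comparison of dimensions using \eqref{eq:param_cyclic} gives $C(I) \otimes_{\mathbb{F}_q} \mathfrak{F} = \mathcal{B}(-I)$, since the left-hand side is contained in the right and both have $\mathfrak{F}$-dimension $|I|$. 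Tensoring with $\mathfrak{F}$ commutes with the Schur product, because an $\mathfrak{F}$-linear combination of products $c_1 * c_2$ equals a product of $\mathfrak{F}$-linear combinations, so
\begin{align*}
(C(I_1)*C(I_2)) \otimes_{\mathbb{F}_q} \mathfrak{F} \;=\; \mathcal{B}(-I_1)*\mathcal{B}(-I_2) \;=\; \mathcal{B}(-(I_1+I_2))
\end{align*}
by Lemma~\ref{lem:extension_product}. Reading off $\mathbb{F}_q$-dimensions yields $\dim_{\mathbb{F}_q}(C(I_1)*C(I_2)) = |I_1+I_2| = \dim_{\mathbb{F}_q} C(I_1+I_2)$, and the inclusion promotes to equality.

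The main technical point is the routine verification that scalar extension commutes with the Schur product together with the identification $C(I) \otimes_{\mathbb{F}_q} \mathfrak{F} = \mathcal{B}(-I)$; both are straightforward once spelled out, so I expect the proof to be short and essentially a dimension count stacked on top of Lemma~\ref{lem:extension_product}, mirroring the flow of the proof of Theorem~3.3 of \cite{C17} mentioned by the authors.
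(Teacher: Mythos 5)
Your proof is correct, and it splits into one part that mirrors the paper and one that is genuinely different. For the product identity you follow essentially the paper's route: extend scalars to $\mathfrak{F}$, identify $C(I)\otimes_{\mathbb{F}_q}\mathfrak{F}$ with $\mathcal{B}(-I)$, invoke the compatibility of the Schur product with scalar extension (this is exactly Lemma 2.23(iii) of \cite{R15}, which the paper cites; note that the justification should be that a \emph{product of $\mathfrak{F}$-linear combinations expands as an $\mathfrak{F}$-linear combination of products} --- bilinearity of $*$ --- rather than the converse you state, which taken literally is false but is also not what is needed), and then apply Lemma \ref{lem:extension_product}. The only divergence is the final step: the paper descends to $\mathbb{F}_q$ via the identity $(C(I_1)*C(I_2))_{\mathfrak{F}}\cap\mathbb{F}_q^n=C(I_1)*C(I_2)$, whereas you compare $\mathbb{F}_q$-dimensions against the inclusion $C(I_1)*C(I_2)\subseteq C(I_1+I_2)$; both work. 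For the sum identity your argument is different in substance: the paper observes that $C(I_1)+C(I_2)$ is itself cyclic and identifies its generator polynomial as $\gcd(g_1,g_2)$, while you stay inside the evaluation description, establish $C(I_1)\cap C(I_2)=C(I_1\cap I_2)$ at the level of $\mathcal{B}(\cdot)$, and conclude by inclusion--exclusion on dimensions. The paper's gcd argument is more elementary and self-contained; yours has the virtue of treating both identities with the same $\mathcal{B}(\cdot)$ plus dimension-count machinery. One small point worth making explicit in your write-up is that $I_1\cap I_2$ and $I_1+I_2$ are again unions of $q$-cyclotomic cosets, so that the dimension formula $\dim C(I)=|I|$ from \eqref{eq:param_cyclic} indeed applies to them.
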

	\begin{proof}
	 	Let $\tilde{C}$ be the smallest linear code containing $C(I_1)$ and $C(I_2)$. Every codeword $\vek{c}\in \tilde{C}$ must be of the form $a_1\vek{c}_2+a_2\vek{c}_2$, where $a_1,a_2\in \mathbb{F}_q$ and $\vek{c}_i\in C(I_i)$. Now, let $T$ be the function that do a cyclic shift. Then
		\begin{align*}
			T(\vek{c})=T(a_1 \vek{c}_1 + a_2 \vek{c}_2) =a_1T(\vek{c}_1) + a_2T(\vek{c}_2),
		\end{align*}
		and since $T(\vek{c}_i)\in C(I_i)$, we also have $\vek{c}\in \tilde{C}$. Therefore, $\tilde{C}$ is cyclic. For $C(I_i)$ to be included in the cyclic code $\tilde{C}$ the generating polynomial $g$ for $\tilde{C}$ must divide $g_i$, the generating polynomial for $C(I_i)$. The smallest code, which means the polynomial with highest degree, satisfying this is $g=\gcd(g_1,g_2)$. This implies that $\tilde{C}=C(I_1\cup I_2)$.

To show the second equality, we start by noticing that $\mathcal{B}(-I_i)=C(I_i)_{\mathfrak{F}}$, i.e. extending the code $C(I_i)$ to scalars over $\mathfrak{F}$ gives back $\mathcal{B}(-I_i)$. This is shown in \cite{C17}. Additionally, we use Lemma 2.23(iii) in \cite{R15} implying that $C(I_1)_{\mathfrak{F}}*C(I_2)_{\mathfrak{F}}=(C(I_1)*C(I_2))_{\mathfrak{F}}$. Putting these two statements together we obtain
\begin{align*}
	C(I_1)*C(I_2)	&=(C(I_1)*C(I_2))_{\mathfrak{F}} \cap \mathbb{F}_q^n=(C(I_1)_\mathfrak{F}*C(I_2)_\mathfrak{F})\cap \mathbb{F}_q^n\\
						&=(\mathcal{B}(-I_1)*\mathcal{B}(-I_2))\cap \mathbb{F}_q^n=\mathcal{B}(-(I_1+I_2))\cap \mathbb{F}_q^n\\
						&=C(I_1+I_2),
\end{align*}
where we have used Lemma \ref{lem:extension_product} in the second to last step.
	\end{proof}
\newpage
\printbibliography
\end{document}